\documentclass[11pt,reqno]{amsart}
\usepackage{fullpage}
\usepackage{amsthm,amsmath,amsfonts,amssymb}
\usepackage{hyperref}
\usepackage{enumerate}
\usepackage{bbm}
\usepackage{xcolor}

\usepackage{tikz}
\usetikzlibrary{decorations.pathmorphing,patterns,decorations.markings,matrix,quantikz2,arrows,shadows}
\usepackage{hyperref}
\usepackage[capitalize]{cleveref}
\usepackage{bbold,enumitem,mathtools}
\usepackage{physics}
\usepackage{booktabs}
\usepackage{adjustbox}

\newcounter{nodenumber}

\newcounter{arraycard}
\newcommand{\drawchainwithfields}[3][0.4\textwidth]{
\def\firstlist{#2}
\def\secondlist{#3}
\begin{center}
\begin{adjustbox}{width=#1}
\begin{tikzpicture}
\setcounter{arraycard}{0}
  \foreach \x [count=\c] in \firstlist {%
    \stepcounter{arraycard}
    \draw [thick] (2*\c,0) -- node [midway,above]{\x} (2*\c+2,0) ;
  }
  \foreach \y [count=\c] in \secondlist
  {%
    \node[circle,fill=black,minimum size=0.8cm,label=below:\y] at (2*\c,0) {};
  }

\end{tikzpicture}
\end{adjustbox}
\end{center}
}

\newcommand{\identity}{\mathbb{1}}

\newcommand{\ignore}[1]{}
\newcommand{\fixme}[1]{{\color{black} #1}}
\newcommand{\etal}{{\em et al.}~}
\newcommand{\iverson}[1]{\lbrack\!\lbrack #1 \rbrack\!\rbrack}

\newtheoremstyle{prenum}
  {5pt}
  {5pt}
  {\slshape}
  {0pt}
  {\bfseries}
  {.}
  { }
  {\thmnumber{#2}\thmname{ #1}\thmnote{ (#3)}}

\newtheorem{definition}{Definition}
\newtheorem{theorem}{Theorem}[section]

\newtheorem{lemma}[theorem]{Lemma}

\newtheorem{claim}{Claim}[section]

\setlength{\marginparwidth}{0.7in}
\setlength{\marginparpush}{-5ex}
\newif\ifnotesw\noteswtrue
   {\ifnotesw\marginpar[\hfill\(\top\)]{\(\top\)}\fi}%
   {\ifnotesw\marginpar[\hfill\(\bot\)]{\(\bot\)}\fi}

\newcommand{\mnote}[1]%
    {\ifnotesw\marginpar%
        [{\scriptsize\begin{minipage}[t]{\marginparwidth}
        \raggedleft#1%
                        \end{minipage}}]%
        {\scriptsize\begin{minipage}[t]{\marginparwidth}
        \raggedright#1%
                        \end{minipage}}%
    \fi}

\newcommand{\ZZ}{\mathbb{Z}}
\newcommand{\RR}{\mathbb{R}}
\newcommand{\CC}{\mathbb{C}}

\newcommand{\OO}{\mathcal{O}}

\DeclareMathOperator{\Sp}{Spec}

\DeclareMathOperator{\nul}{Null}
\DeclareMathOperator{\ran}{Ran}

\DeclareMathOperator{\diag}{diag}
\DeclareMathOperator{\Mat}{Mat}

\DeclareMathOperator{\myspan}{span}
\DeclareMathOperator{\Adj}{Adj}

\newcommand{\Exp}{\mathbb{E}}

\newcommand{\lip}[2]{\langle #1, #2 \rangle}

\newcommand{\trex}{{T}.{\em rex}~}

\newcommand{\heff}{\mathbb{h}}



\title{The strength of weak coupling}
\author{Alastair Kay}
\address{Department of Mathematics, Royal Holloway University of London, Egham, Surrey, TW20 0EX, UK.}
\email{alastair.kay@rhul.ac.uk}
\author{Christino Tamon}
\address{Department of Computer Science, Clarkson University, Potsdam, New York 13699, USA.}
\email{ctamon@clarkson.edu}

\date{\today}
\begin{document}
\maketitle

\begin{abstract}
A paradoxical idea in quantum transport is that attaching weakly-coupled edges to a large base graph
creates high-fidelity quantum state transfer.
We provide a mathematical treatment that rigorously prove this folklore idea.
Our proofs are elementary and build upon the Feshbach-Schur method from perturbation theory.
We also show the idea is effective in circumventing Anderson localization in spin chains
and finding speedups in hitting times useful for quantum search.
\end{abstract}

\section{Introduction}

A continuous-time quantum walk on a graph $G$ is a time-varying unitary matrix $U(t)=\exp(-itA)$ 
where $A$ is the adjacency matrix of $G$.
This framework is useful for developing quantum algorithms (see \cite{fg98,fg98b}), 
which had culminated in a quantum walk algorithm with provable exponential speedup 
\cite{ccdfgs03} (a guarantee missing in both Shor factoring and Grover search).

For two vertices $a,b$ of $G$ with characteristic vectors $e_a,e_b$, let their fidelity function be $f_{a,b}(t)=|\lip{e_b}{U(t)e_a}|$.
By Born's rule, the probability of measuring $b$ at time $t$ given that 
the quantum walk starts at $a$ is $f_{a,b}(t)^2$.
We say $G$ has pretty good state transfer (PGST) from $a$ to $b$ if for any $\epsilon > 0$
there exists a time $t_0$ so that $f_{a,b}(t_0) \ge 1-\epsilon$ (see \cite{gkss12,vz12}); 
it is called perfect state transfer (PST) if $\epsilon=0$ is realizable.
Quantum state transfer is fundamental to quantum communication as it can be used as a building block 
for protocols such as entanglement generation and state preparation (see \cite{cdel04,christandl2005,acde04,ks05,k10}).

Although PST/PGST are useful abstractions, there are several important obstacles in developing
state transfer in practice. Arguably, PST is impossible to realize in real-world applications.
Recently, due to experimental challenges in realistic scenarios, 
a stronger emphasis has been placed on {\em high-fidelity} state transfer (HFST) (see \cite{cmf16,ls24,ls25,vz24,cgs}).
Here, for a small but fixed $\epsilon_0 > 0$, we ask for an efficient time $t_0$ for which $f_{a,b}(t_0) \ge 1-\epsilon_0$.
This differs markedly from PGST as $\epsilon_0$ is not required to be arbitrarily close to $0$
and time $t_0$ must be a polynomial function in the relevant parameters (such as graph size or diameter or transfer distance).
Here, the emphasis is on transfer time which is neither nondeterministic nor astronomical.

We explore a paradoxical idea (called \trex) which attaches weakly-coupled pendant edges (or tiny arms) to a base graph (large body).
This idea was studied by W\'{o}jcik \etal (see \cite{wlkggb05,wlkggb07}).
Here, we describe a mathematical treatment of this idea using the Feshbach-Schur method in perturbation theory (see \cite{k80,gs20}).
Within this framework, we rigorously prove that high-fidelity quantum state transfer occurs.
The advantage of using the Feshbach-Schur method is that it is elementary 
and it insulates asymptotic arguments (the bane of perturbation theory).
We also prove that the transfer time of the \trex scheme does not depend on the diameter.
In contrast, methods based on strong potentials exhibit exponential scaling with the diameter (see \cite{ls24,ls25}).

Another formidable obstacle for state transfer is Anderson localization (see \cite{a58,klmw07,redsm16}).
By definition, localization is the absence of transport across a spin chain due to random diagonal perturbations. 
We show how the \trex idea overcomes this obstacle by constructing a robust state transfer protocol which has high fidelity.
Our protocol exploits nonsingularities of the underlying spin chains in a direct way along with the use of a few clean qubits.

We also apply the \trex idea to hitting times on graphs with consequences to quantum search.
For the latter, we reimagine the vertex-located oracle (see \cite{fg98}) as a pendant edge oracle.
The search algorithm simply ``pitches'' a random pendant edge onto the graph and apply the \trex walk.
This reduces quantum search, viewed as reverse mixing, to a quantum hitting process (which is a standard
technique in Markov chain theory \cite{lpw}).

As concluding remarks, the \trex idea had already found two unexpected applications elsewhere:
a simple matrix inversion quantum algorithm \cite{kt} 
and an optimal {\em perfect} state transfer protocol robust against timing sensitivities \cite{kkt}. 
We anticipate further uses of the \trex idea which we leave to future treks.

\section{Preliminaries}

We review notation and terminology used in this work.
Let $\Mat_n(\RR)$ be the set of all $n \times n$ real matrices.
We use $\identity_n$ and $J_n$ to denote the identity and all-one matrices of order $n$, respectively;
we omit the subscript, whenever the dimensions are clear from context.
The spectrum of $A \in \Mat_n(\RR)$ is denoted $\sigma(A)$. In most cases, we will restrict ourselves to symmetric matrices, whose spectrum exists and whose eigenvalues are real.

For an invertible matrix $A \in \Mat_n(\RR)$ 
with $\lambda_{\max}(A) = \max\{|\lambda| : \lambda \in \sigma(A)\}$
and $\lambda_{\min}(A) = \min\{|\lambda|: \lambda \in \sigma(A)\}$, 
the {\em condition number} of $A$ is defined as
\[
	\kappa(A) = \norm{A} \norm{A^{-1}} = \frac{\lambda_{\max}(A)}{\lambda_{\min}(A)}
\]
where $\norm{A}$ is the spectral norm of $A$ (see \cite{hj2}, page 385).
We shall assume $\norm{A}$ is bounded.

For a matrix $A \in \Mat_n(\RR)$, the {\em resolvent} of $A$ is given by 
\begin{equation} \label{eqn:resolvent}
R(\zeta,A) = (\zeta \identity - A)^{-1}.
\end{equation}
Its resolvent set $\rho(A)$ is the set of all values $\zeta \in \CC$ for which $\zeta\identity - A$ is invertible (and bounded);
note that $\sigma(A) = \CC \setminus \rho(A)$.
We state two useful identities involving the resolvent.

\begin{lemma} (Teschl \cite{t14})
For matrices $A,B \in \Mat_n(\RR)$ and for $\zeta,\mu \in \CC$, we have
\begin{align} 
\label{eqn:1st-resolvent}
R(\zeta,A) - R(\mu,A) &= (\mu - \zeta)R(\zeta,A)R(\mu,A),
	\ \ \mbox{ $\zeta,\mu \in \rho(A)$} \\
\label{eqn:2nd-resolvent}
R(\zeta,A) - R(\zeta,B) &= R(\zeta,A)(A - B)R(\zeta,B),
	\ \ \mbox{ $\zeta \in \rho(A) \cap \rho(B)$.}
\end{align}
\end{lemma}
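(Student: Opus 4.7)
The plan is to derive both identities from a single elementary algebraic fact: for any two invertible matrices $X, Y$ of the same size,
\[
X^{-1} - Y^{-1} = X^{-1}(Y - X)Y^{-1}.
\]
This is immediate by multiplying both sides on the left by $X$ and on the right by $Y$, which turns the equation into the tautology $Y - X = Y - X$. The resolvents are invertible precisely when $\zeta \in \rho(A)$ and $\mu \in \rho(A)$ (resp.\ $\zeta \in \rho(A) \cap \rho(B)$), so the hypotheses of the lemma give us exactly the invertibility we need.

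For \eqref{eqn:1st-resolvent}, I would set $X = \zeta\identity - A$ and $Y = \mu\identity - A$, so that $Y - X = (\mu - \zeta)\identity$. The algebraic identity then yields
\[
R(\zeta,A) - R(\mu,A) = R(\zeta,A)\,(\mu - \zeta)\identity\,R(\mu,A) = (\mu - \zeta)\,R(\zeta,A)\,R(\mu,A),
\]
where in the last step I use that the scalar $(\mu - \zeta)$ commutes past $R(\zeta,A)$.

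For \eqref{eqn:2nd-resolvent}, I would instead take $X = \zeta\identity - A$ and $Y = \zeta\identity - B$, so that $Y - X = A - B$. The same identity immediately gives
\[
R(\zeta,A) - R(\zeta,B) = R(\zeta,A)\,(A - B)\,R(\zeta,B).
\]

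There is no real obstacle here; both identities reduce to the same one-line factoring trick, and the only care needed is to be sure the factor $Y - X$ lands on the correct side (between the two resolvents rather than outside them), which the derivation of $X^{-1} - Y^{-1} = X^{-1}(Y-X)Y^{-1}$ makes automatic. If anything, the subtle point worth remarking is that the identities are purely algebraic and do not rely on the matrices being symmetric or on any norm bound; they hold whenever both resolvents exist as bounded operators.
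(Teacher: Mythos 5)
Your proof is correct and is exactly the standard one-line argument (the identity $X^{-1}-Y^{-1}=X^{-1}(Y-X)Y^{-1}$ specialized to the two choices of $X,Y$); the paper itself gives no proof, deferring to Teschl, where this same derivation appears. Nothing further is needed.
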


For $z \in \CC$ and a subset $S \subset \CC$, the distance of $z$ to $S$ is defined as
$d(z,S) = \inf\{|z - a| : a \in S\}$.

\begin{lemma} \label{lemma:resolvent-bound}
For a matrix $A \in \Mat_n(\RR)$ and a scalar $\zeta \not\in \sigma(A)$, we have
$\norm{R(\zeta,A)} = 1/d(\zeta,\sigma(A))$.

\begin{proof}
Note $\norm{(\zeta\identity - A)^{-1}} = \sup\{|\zeta - \lambda|^{-1} : \lambda \in \sigma(A)\}$.
The latter equals to $1/\inf\{|\zeta - \lambda| : \lambda \in \sigma(A)\}$, proving the claim.
\end{proof}
\end{lemma}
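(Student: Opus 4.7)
The plan is to prove this by reducing the spectral norm of $R(\zeta,A)$ to a statement about eigenvalues. Since the paper has announced that it restricts attention to symmetric matrices, I will work in the normal setting where a spectral decomposition is available. Concretely, I will write $A = \sum_{\lambda \in \sigma(A)} \lambda\, P_\lambda$ for orthogonal spectral projectors $P_\lambda$ satisfying $P_\lambda P_\mu = \delta_{\lambda\mu} P_\lambda$ and $\sum_\lambda P_\lambda = \identity$.

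Next, because each $P_\lambda$ commutes with $\zeta\identity$, the operator $\zeta\identity - A = \sum_\lambda (\zeta - \lambda) P_\lambda$ is invertible exactly when $\zeta \notin \sigma(A)$, and in that case
\[
R(\zeta,A) = (\zeta\identity - A)^{-1} = \sum_{\lambda \in \sigma(A)} \frac{1}{\zeta - \lambda}\, P_\lambda.
\]
This representation is a spectral decomposition of $R(\zeta,A)$ (as a normal matrix), so its spectrum is $\{(\zeta - \lambda)^{-1} : \lambda \in \sigma(A)\}$ and its spectral norm equals the largest magnitude among these values. Hence $\|R(\zeta,A)\| = \sup\{|\zeta - \lambda|^{-1} : \lambda \in \sigma(A)\}$.

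Finally, I convert this supremum into an infimum: for positive reals $x_\lambda = |\zeta - \lambda|$, $\sup_\lambda x_\lambda^{-1} = (\inf_\lambda x_\lambda)^{-1}$. Since $\sigma(A)$ is finite, the infimum is attained and equals $d(\zeta,\sigma(A))$, giving $\|R(\zeta,A)\| = 1/d(\zeta,\sigma(A))$.

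The only subtle point is the appeal to normality to identify the spectral norm with the largest eigenvalue modulus; if $A$ is not assumed symmetric then the claim can fail (one would then only get the weaker bound $\|R(\zeta,A)\| \ge 1/d(\zeta,\sigma(A))$ from the spectral radius). Under the standing symmetry convention stated earlier in the paper, this issue does not arise, and the proof is essentially a one-line calculation from the spectral theorem.
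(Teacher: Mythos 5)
Your proof is correct and follows essentially the same route as the paper: both identify $\norm{R(\zeta,A)}$ with $\sup\{|\zeta-\lambda|^{-1} : \lambda \in \sigma(A)\}$ and then convert the supremum into $1/\inf$. The only difference is that you explicitly justify that identification via the spectral decomposition and flag the reliance on symmetry/normality, a point the paper's one-line proof leaves implicit (and which is indeed where the statement would fail for general non-normal matrices).
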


For simplicity, we sometimes identify a projection operator $P$ with its range $\ran(P)$.

\section{The Feshbach-Schur method}

Let $H \in \Mat_n(\RR)$ be a symmetric matrix and let $P_0,P_1$ be a pair of orthogonal projections 
satisfying $P_0 + P_1 = I$. 
Clearly, $H$ admits a block decomposition given by
\[
	H = (P_0 + P_1)H(P_0 + P_1) = P_0 H P_0 + P_0 H P_1 + P_1 H P_0 + P_1 H P_1.
\]

\par\noindent{\bf Notation}: Let $H_{j,k} = P_j H P_k$, for $j,k=0,1$, denote the projected operators (acting on the full space). On the other hand, we let $H_{P_j}$ denote 
$P_j H P_j$ when restricted to $\ran(P_j)$.

To motivate the ensuing discussion, suppose we wish to solve the eigenvalue-eigenvector equation given by
\begin{equation} \label{eqn:eigen-problem}
	\begin{pmatrix} 
	A_{0} & A_{1} \\
	A_{2} & A_{3}
	\end{pmatrix}
	\begin{pmatrix}
	\psi_0 \\
	\psi_1
	\end{pmatrix}
	=
	\lambda
	\begin{pmatrix}
	\psi_0 \\
	\psi_1
	\end{pmatrix}.
\end{equation}
This implies 
\begin{align*}
A_{0}\psi_0 + A_{1}\psi_1 &= \lambda\psi_0 \\
A_{2}\psi_0 + A_{3}\psi_1 &= \lambda\psi_1.
\end{align*}
The second equation yields
\begin{align} \label{eqn:psi1}
	\psi_1 &= (\lambda\identity - A_{3})^{-1}A_{2}\psi_0
\end{align}
which we can substitute into the first to obtain
\[
	\left( A_{0} + A_{1}(\lambda\identity - A_{3})^{-1}A_{2} \right) \psi_0 = \lambda\psi_0.
\]
Notice this is related to Schur complementation.

Based on these ideas, if we let
\begin{equation} \label{eqn:feshbach}
	F(\lambda) = [H + H P_1(\lambda\identity - H_{P_1})^{-1} P_1 H]_{P_0},
\end{equation}
we may reduce the eigenvalue-eigenvector problem 
$H\psi = \lambda\psi$ to a fixed-point equation
\begin{equation} \label{eqn:fixed-point-problem}
	F(\lambda)\psi = \lambda\psi
\end{equation}
provided $\lambda\identity - H_{P_1}$ is invertible in $\ran(P_1)$.
Solving for $\lambda$ and $\psi$ in \cref{eqn:fixed-point-problem}, 
the resulting eigenvector of $H$ is 
\[
	\Psi = Q(\lambda)\psi
\]
where
\[
	Q(\lambda) = P_0 + P_1(\lambda \identity - H_{P_1})^{-1} P_1 H P_0.
\]
From the notation, it will be clear from context whether the operator acts
on the full or projected spaces.

This reduction of the eigenvalue problem of $H$ in \cref{eqn:eigen-problem}
to a nonlinear problem involving $F(\lambda)$ in \cref{eqn:fixed-point-problem} is known as the {\em Feshbach-Schur} method 
(see \cite{dss21,gs20,sz07}).

\subsection{The \trex Cometh} \label{subsec:trex}
We focus on {\em linear} perturbation theory (see \cite{k80}, page 72) where
\[
	H=H_0+\delta W,
	\ \hspace{.3in} \
	\mbox{ where $\delta\kappa \ll 1$.}
\]
Here $\kappa$ will be the condition number of the relevant base graph (see below).
Let $\lambda_0$ be an eigenvalue of $H_0$ of multiplicity $d$, for $d \ge 2$.
Let $P_0$ be the projection onto $N_0 = \nul(\lambda_0 \identity - H_0)$ 
and $P_1 = \identity - P_0$ be the projection onto the orthogonal complement $N_0^\perp$.

The map $F(\lambda)$ of $H$ relative to $P_0$ is defined as
\begin{equation} \label{eqn:fs-def}
	F(\lambda) := \lambda_0\identity + \delta W_{P_0} + \delta^2 W_{01}(\lambda \identity - H_{P_1})^{-1} W_{10},
\end{equation}
under the assumption that $\lambda\identity - H_{P_1}$ is invertible in $\ran(P_1)$.
    
\begin{theorem} \label{thm:fs}
(Gustafson-Sigal \cite{gs20} and Dusson \etal \cite{dss21})\\ 
The map $F(\lambda)$ satisfies
\begin{equation} \label{eqn:fs-prop}
	H\Psi = \lambda\Psi
	\ \mbox{ if and only if } \
	F(\lambda)\psi = \lambda\psi,
	\ \hspace{0.25in} \ \mbox{ for $\psi \in \ran(P_0)$}
\end{equation}
where vectors $\psi$ and $\Psi$ are related through
\begin{equation} \label{eqn:fs-evecs}
\Psi = Q(\lambda)\psi 
	\ \ \mbox{ and } \ \
\psi = P_0\Psi
\end{equation}
with
\begin{equation} \label{eqn:fs-lift}
	Q(\lambda) = P_0 + \delta P_1(\lambda \identity - H_{P_1})^{-1}W_{10}.
\end{equation}
Thus, $\Psi = \psi + \psi^\perp$, for orthogonal $\psi$ and $\psi^\perp$,
where 
\begin{equation} \label{eqn:state-perp}
\psi^\perp = \delta P_1(\lambda \identity - H_{P_1})^{-1} P_1 W\psi
\end{equation}
\end{theorem}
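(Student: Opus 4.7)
The plan is to perform the block decomposition $\Psi=P_0\Psi+P_1\Psi$ and then solve the two coupled projected equations. Set $\psi=P_0\Psi$ and $\psi^\perp=P_1\Psi$, so that $\psi\in\ran(P_0)$ and $\psi^\perp\in\ran(P_1)$ and these are orthogonal. Applying $P_0$ and $P_1$ in turn to $H\Psi=\lambda\Psi$ and using $P_0+P_1=\identity$ yields
\begin{align*}
P_0 H P_0 \psi + P_0 H P_1 \psi^\perp &= \lambda \psi, \\
P_1 H P_0 \psi + P_1 H P_1 \psi^\perp &= \lambda \psi^\perp.
\end{align*}
The main structural input I would invoke is that $P_0$ is the spectral projection of $H_0$ onto its $\lambda_0$-eigenspace, so $H_0$ commutes with $P_0$ and hence with $P_1$. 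Consequently $P_0 H_0 P_1=P_1 H_0 P_0=0$, and writing $H=H_0+\delta W$ gives $P_0 H P_0=\lambda_0 P_0+\delta W_{00}$, $P_0 H P_1=\delta W_{01}$, $P_1 H P_0=\delta W_{10}$, and $P_1 H P_1$ coincides on $\ran(P_1)$ with $H_{P_1}$.

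Next I would eliminate $\psi^\perp$. The second equation becomes
\[
(\lambda\identity - H_{P_1})\psi^\perp = \delta\, W_{10}\psi,
\]
which, under the standing hypothesis that $\lambda\identity-H_{P_1}$ is invertible on $\ran(P_1)$, solves to
\[
\psi^\perp = \delta\,(\lambda\identity - H_{P_1})^{-1}W_{10}\psi = \delta\, P_1(\lambda\identity - H_{P_1})^{-1}P_1 W\psi,
\]
since $W_{10}\psi=P_1 W P_0\psi=P_1 W\psi$ and the inverse already maps into $\ran(P_1)$. This is precisely \cref{eqn:state-perp}, and substitution into the first projected equation gives
\[
\bigl[\lambda_0\identity + \delta W_{P_0} + \delta^2 W_{01}(\lambda\identity - H_{P_1})^{-1}W_{10}\bigr]\psi = \lambda\psi,
\]
i.e.\ $F(\lambda)\psi=\lambda\psi$ on $\ran(P_0)$, establishing the forward implication. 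The relation $\Psi=Q(\lambda)\psi$ then reads off from $\Psi=\psi+\psi^\perp$ and the formula just derived for $\psi^\perp$.

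For the converse, I would start from a solution $\psi\in\ran(P_0)$ of $F(\lambda)\psi=\lambda\psi$, \emph{define} $\psi^\perp$ by \cref{eqn:state-perp} and $\Psi=Q(\lambda)\psi=\psi+\psi^\perp$, and simply verify the two projected equations above. The $P_1$-component holds by construction of $\psi^\perp$, and the $P_0$-component reduces to the fixed-point equation after the same substitution. Since $P_0+P_1=\identity$, combining these reassembles $H\Psi=\lambda\Psi$.

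The main point requiring care, rather than a real obstacle, is distinguishing operators acting on the full space (the $H_{j,k}$ and $W_{j,k}$) from the restriction $H_{P_1}$ on $\ran(P_1)$ whose invertibility is what legitimizes solving for $\psi^\perp$; once the bookkeeping is consistent, both directions are essentially algebraic manipulations of the block system. The orthogonality of $\psi$ and $\psi^\perp$ is automatic from $\psi\in\ran(P_0)$, $\psi^\perp\in\ran(P_1)$ and $P_0 P_1=0$.
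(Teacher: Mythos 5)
Your proposal is correct and follows essentially the same route the paper takes: the block decomposition of $H\Psi=\lambda\Psi$ via $P_0,P_1$, elimination of the $P_1$-component using the invertibility of $\lambda\identity-H_{P_1}$, and substitution to obtain the fixed-point equation $F(\lambda)\psi=\lambda\psi$, exactly as in the paper's Schur-complement derivation preceding the theorem (the theorem itself is then stated by citation to \cite{gs20,dss21}). Your added observation that $P_0$ commutes with $H_0$, so that the off-diagonal blocks of $H$ are $\delta W_{01}$ and $\delta W_{10}$, is precisely what specializes the general Feshbach map to the form in \cref{eqn:fs-def}.
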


Suppose $\zeta_k \in \RR$ and $\psi_k \in \CC^n$ are fixed-point solutions of \cref{eqn:fs-prop}: 
\[
	F(\zeta_k)\psi_k = \zeta_k\psi_k,
	\ \hspace{.2in} \
	\mbox{ for $k=1,\ldots,d$.}
\]

\par\noindent{\bf Assumption A}: The scalars $\zeta_k$ are {\em distinct}.
In the parlance of perturbation theory, the degeneracy of $\lambda_0$ is {\em fully resolved}.

By \cref{thm:fs}, each $\zeta_k$ is an eigenvalue of $H$ with eigenvector
\[
	\Psi_k = \psi_k + \psi_k^\perp,
	\ \ \
	\mbox{ where $\psi_k^\perp = \delta P_1 R(\zeta_k, (H_{11})_{P_1}) P_1 W \psi_k$.}
\]

\begin{lemma} \label{lemma:eigval-estimate} (Gustafson and Sigal \cite{gs20}, page 138) \\
Let $\lambda$ be a fixed-point solution of $F(\lambda)$. Then,
\[
	\lambda = \lambda_0 + \delta\mu(\delta) + \OO(\delta^3),
\]
where $\mu(\delta) \in \sigma(M)$ with
\[
M = W_{P_0} + \delta [W_{01} R(\lambda_0, (H_{11})_{P_1})W_{10}]_{P_0}.
\]
\end{lemma}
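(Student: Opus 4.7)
The plan is to substitute the definition \cref{eqn:fs-def} of $F$ into the fixed-point equation $F(\lambda)\psi=\lambda\psi$, set $\nu=\lambda-\lambda_0$, and then expand the resolvent around $\lambda_0$. This rewrites the fixed-point equation as
\begin{equation}
\nu\,\psi \;=\; \delta\,W_{P_0}\psi \;+\; \delta^{2}\,W_{01}\,R(\lambda,(H_{11})_{P_1})\,W_{10}\,\psi,
\end{equation}
so the task reduces to replacing $R(\lambda,(H_{11})_{P_1})$ by $R(\lambda_0,(H_{11})_{P_1})$ at a controlled cost, after which the right-hand side becomes exactly $\delta M\psi$ up to a remainder.

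First, I would establish the a priori bound $|\nu|=\OO(\delta)$. The Feshbach-Schur setup presupposes $\lambda_0\notin\sigma((H_{11})_{P_1})$, so by \cref{lemma:resolvent-bound} the resolvent $R(\,\cdot\,,(H_{11})_{P_1})$ is uniformly bounded on a closed disk $U$ around $\lambda_0$ whose radius is independent of $\delta$; in the setting of \cref{subsec:trex} the size of this gap is governed by the condition number $\kappa$ of the base operator, and the assumption $\delta\kappa\ll 1$ is precisely what keeps $U$ nontrivial. On $U$, the definition of $F$ gives $F(\lambda)-\lambda_0\identity=\OO(\delta)$ in operator norm, so any fixed point of $F$ in $U$ forces $|\nu|=\OO(\delta)$.

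Next, I would expand the resolvent. The first resolvent identity \cref{eqn:1st-resolvent}, applied with $A=(H_{11})_{P_1}$ and $\mu=\lambda_0$, yields
\begin{equation}
R(\lambda,(H_{11})_{P_1}) \;=\; R(\lambda_0,(H_{11})_{P_1}) \;+\; (\lambda_0-\lambda)\,R(\lambda,(H_{11})_{P_1})\,R(\lambda_0,(H_{11})_{P_1}),
\end{equation}
and combining this with $|\nu|=\OO(\delta)$ and the uniform resolvent estimate gives $R(\lambda,(H_{11})_{P_1})=R(\lambda_0,(H_{11})_{P_1})+\OO(\delta)$. Substituting back produces
\begin{equation}
\nu\,\psi \;=\; \delta\,W_{P_0}\psi \;+\; \delta^{2}\,W_{01}\,R(\lambda_0,(H_{11})_{P_1})\,W_{10}\,\psi \;+\; \OO(\delta^{3}) \;=\; \delta\,M\psi \;+\; \OO(\delta^{3}),
\end{equation}
for a unit vector $\psi\in\ran(P_0)$, with $M$ exactly as in the statement.

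Finally, because $W$ is symmetric, so is $M$ on the finite-dimensional space $\ran(P_0)$, and the last display says that $\nu/\delta$ is an $\OO(\delta^{2})$-approximate eigenvalue of $M$ with approximate unit eigenvector $\psi$. A standard application of Weyl's inequality (or the min-max principle) then produces a genuine eigenvalue $\mu(\delta)\in\sigma(M)$ with $\nu/\delta=\mu(\delta)+\OO(\delta^{2})$, which rearranges to $\lambda=\lambda_0+\delta\mu(\delta)+\OO(\delta^{3})$. The main obstacle will be keeping every $\OO(\cdot)$ estimate uniform in the small parameter $\delta$; this reduces entirely to securing a $\delta$-independent spectral gap between $\lambda_0$ and $\sigma((H_{11})_{P_1})$, and that gap is what the hypothesis $\delta\kappa\ll 1$ from \cref{subsec:trex} silently supplies.
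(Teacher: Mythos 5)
The paper does not actually prove this lemma---it is imported verbatim from Gustafson--Sigal \cite{gs20}---so there is no internal proof to compare against; judged on its own, your argument is correct and is essentially the standard derivation. The chain (i) a priori bound $|\lambda-\lambda_0|=\OO(\delta)$ from $\norm{F(\lambda)-\lambda_0\identity}=\OO(\delta)$ on a disk where the resolvent is uniformly bounded, (ii) first resolvent identity \eqref{eqn:1st-resolvent} to trade $R(\lambda,(H_{11})_{P_1})$ for $R(\lambda_0,(H_{11})_{P_1})$ at cost $\OO(\delta^3)$, and (iii) the Hermitian residual bound $d(\theta,\sigma(M))\le\norm{M\psi-\theta\psi}/\norm{\psi}$ applied to $\theta=(\lambda-\lambda_0)/\delta$, is exactly how this estimate is obtained in the cited source, and each step is sound ($M$ is indeed symmetric since $W_{10}=W_{01}^\dagger$ and $\lambda_0\in\RR$). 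Two fine points worth making explicit: the lemma as stated admits any fixed point of $F$, whereas your step (i) only controls fixed points already known to lie in the disk $U$---near $\sigma((H_{11})_{P_1})$ the resolvent blows up and $F(\lambda)$ need not stay $\OO(\delta)$-close to $\lambda_0\identity$---so the localization $\lambda\in U$ is a genuine hypothesis that the paper leaves implicit rather than something you can derive; and the hypothesis $\delta\kappa\ll1$ does not ``supply'' the spectral gap (that is Assumption~B(iii) on $H_0$, a property of the unperturbed operator), it only guarantees the perturbation is small relative to that gap so your constants stay uniform. Neither point invalidates the argument.
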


\cref{lemma:eigval-estimate} can be applied to a more general setting than what is considered here 
(see \cite{gs20} for further details).

\smallskip
\par\noindent{\bf Assumption B}: 
We make the following additional assumptions:
\begin{enumerate}[label = (\roman*)]
\item {\em $W$ acts trivially in the $P_0$ and $P_1$ subspaces}.\\
	That is, $W_{00} = W_{11} = 0$. In effect, $W$ connects the subspaces $P_0$ and $P_1$. 

\item {\em $\lambda_0 = 0$}.\\
	If not, we may always take the diagonal shift $H_0 - \lambda_0 I$.

\item {\em $(H_0)_{P_1}$ is (sufficiently) nonsingular or has good condition number}.\\
	This implies that for a (small enough) $\kappa$, we have
	\begin{equation}
	d(0,\sigma((H_0)_{P_1})) \ge 1/\kappa.
	\end{equation}
	But, in some cases, we can remove this assumption (see \cref{sec:resonant}).
\end{enumerate}
Using the third assumption, by \cref{lemma:resolvent-bound}, 
we have 
\[
	\norm{R(0, (H_0)_{P_1})} \le \kappa.
\]
From \cref{lemma:eigval-estimate}, we have $\mu(\delta) \le \delta\kappa$ which implies
\begin{equation} \label{eqn:eigval-bound}
	|\lambda| = \OO(\delta^2\kappa).
\end{equation}
Furthermore, 
\begin{align} 
d(\lambda, \sigma((H_0)_{P_1})) 
	&\ge d(0, \sigma((H_0)_{P_1})) - |\lambda| \\
	&\ge 1/\kappa - \OO(\delta^2\kappa).
\end{align}
As $\delta\kappa \ll 1$, we see that
\begin{equation} \label{eqn:aux-bound}
d(\lambda, \sigma((H_0)_{P_1})) = \Omega(1/\kappa).
\end{equation}

\subsection{Effective Hamiltonian}

In what follows, let $\varepsilon := \delta\kappa$.
To track our quantum walk within the subspace $P_0$, we use the effective Hamiltonian (defined below)
which we will derive from the (full) perturbed Hamiltonian $H$.

Assume that the spectral decomposition of $H$ is given by
\[
	H = \sum_{k=1}^{d} \zeta_k \Psi_k^{\dagger}\Psi_k + \fixme{\sum_\theta \theta Q_\theta}
\]
where $\zeta_k$ are the $d$ distinct eigenvalues (split off from the eigenvalue $\lambda_0$ by perturbation)
and $\theta$ represent the other eigenvalues (with eigenprojectors $Q_\theta$).
Furthermore, let
\[
	\Psi_k = \psi_k + \psi_k^\perp
	\ \hspace{.5in} \
	(k=1,\ldots,d) 
\]
where $\psi_k \in P_0$ are the fixed-point solutions of the Feshbach map (see \cref{thm:fs}).

\begin{claim} \label{claim:relative}
For $k=1,\ldots,d$, $\norm{\psi_k^\perp}/\norm{\psi_k} \le \OO(\varepsilon)$,
whence $\norm{\psi_k}^2 \ge 1-\OO(\varepsilon^2)$ and $\norm{\psi_k^\perp}^2 \le \OO(\varepsilon^2)$.

\begin{proof}
First, we show $\norm{\psi_k^\perp}/\norm{\psi_k} \le \OO(\varepsilon)$; that is, the length of
$\psi_k^\perp$ is negligible compared to the length of $\psi_k$. To see this, note
\begin{align*}
\lVert{\psi_k^\perp}\rVert 
	&\le \delta \norm{P_1 R(\zeta_k,H_{11}) P_1 W} \norm{\psi_k} \\
	&\le \delta \norm{R(\zeta_k,P_1 H_0 P_1)} \norm{\psi_k},
		\ \ \ \mbox{ as $\norm{P_1},\norm{P_1 W} \le 1$ } \\
	&\le \delta \frac{1}{d(\zeta_k, \sigma(P_1 H_0 P_1))} \norm{\psi_k},
		\ \ \ \mbox{ by \cref{lemma:resolvent-bound}} \\
	&\le \OO(\varepsilon) \norm{\psi_k},
		\ \ \ \mbox{ using \cref{eqn:aux-bound}.}
\end{align*}
Now, we have
$1 = \norm{\Psi_k}^2 = \norm{\psi_k}^2 + \lVert\psi_k^\perp\rVert^2 \le (1 + \OO(\varepsilon^2))\norm{\psi_k}^2$.
\end{proof}
\end{claim}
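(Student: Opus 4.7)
The plan is to read off the bound on $\|\psi_k^\perp\|$ directly from the formula \cref{eqn:state-perp} and then derive the two stated consequences by Pythagoras.

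First I would take norms in $\psi_k^\perp = \delta P_1 R(\zeta_k,(H_{11})_{P_1})P_1 W\psi_k$, use submultiplicativity, and use $\|P_1\|\le 1$ together with $\|W\|\le 1$ (the natural normalization; in any case this costs only a constant). The key simplification is Assumption B(i): since $W_{11}=0$, we have $(H_{11})_{P_1}=(H_0)_{P_1}$, so the resolvent appearing here is exactly $R(\zeta_k,(H_0)_{P_1})$. Then \cref{lemma:resolvent-bound} gives
\[
\|R(\zeta_k,(H_0)_{P_1})\| = \frac{1}{d(\zeta_k,\sigma((H_0)_{P_1}))}.
\]

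Next I would invoke \cref{eqn:aux-bound}, which states $d(\zeta_k,\sigma((H_0)_{P_1}))=\Omega(1/\kappa)$; this is precisely what Assumption B(iii) combined with the eigenvalue-shift estimate \cref{lemma:eigval-estimate} was set up to deliver, so no extra work is needed. Chaining these gives
\[
\|\psi_k^\perp\| \;\le\; \delta \cdot O(\kappa) \cdot \|\psi_k\| \;=\; O(\varepsilon)\,\|\psi_k\|,
\]
which is the first assertion of the claim.

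For the remaining two inequalities, I would use that $\psi_k\in\ran(P_0)$ and $\psi_k^\perp\in\ran(P_1)$ are orthogonal, so the unit eigenvector $\Psi_k=\psi_k+\psi_k^\perp$ of $H$ satisfies $1=\|\psi_k\|^2+\|\psi_k^\perp\|^2$. Substituting the bound above, $1 \le (1+O(\varepsilon^2))\,\|\psi_k\|^2$, which rearranges to $\|\psi_k\|^2 \ge 1-O(\varepsilon^2)$ and therefore $\|\psi_k^\perp\|^2 \le O(\varepsilon^2)$.

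The only place where care is actually required is step two: one must verify that the resolvent bound is being applied to $(H_0)_{P_1}$ and not to $(H_{11})_{P_1}$, and that the distance estimate \cref{eqn:aux-bound} holds at the perturbed value $\zeta_k$ rather than only at $\lambda_0=0$. Both points are handled by Assumption B(i) and by the fact that \cref{lemma:eigval-estimate} places $\zeta_k$ within $O(\delta^2\kappa)$ of $\lambda_0$, a displacement absorbed by the $\Omega(1/\kappa)$ spectral gap of $(H_0)_{P_1}$ under $\delta\kappa\ll 1$. Once that observation is in place, the rest is routine norm bookkeeping.
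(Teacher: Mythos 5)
Your proposal is correct and follows essentially the same route as the paper: bound $\norm{\psi_k^\perp}$ via the explicit formula \cref{eqn:state-perp}, control the resolvent with \cref{lemma:resolvent-bound} and \cref{eqn:aux-bound}, then conclude by Pythagoras on $\Psi_k=\psi_k+\psi_k^\perp$. If anything, you are slightly more explicit than the paper in justifying the identification $(H_{11})_{P_1}=(H_0)_{P_1}$ via Assumption B(i), a step the paper passes over silently.
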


Let $\tilde{\psi}_k = \psi_k/\norm{\psi_k}$ be the state $\psi_k$ normalized. 
We shall show $\{\tilde{\psi}_k : k=1\ldots,d\}$ is a nearly-orthonormal basis for $P_0$.

\begin{claim} \label{claim:orthogonal}
If $k \neq \ell$, $\lip{\tilde{\psi}_k}{\tilde{\psi}_\ell} \le \OO(\varepsilon^2)$.

\begin{proof}
We have
$\lip{\Psi_k}{\psi_k^\perp} = \norm{\psi_k^\perp}^2 \le \OO(\varepsilon^2)$ by \cref{claim:relative}.
Thus,
\[
	\lip{\psi_k}{\psi_\ell}
	= \lip{\Psi_k - \psi_k^\perp}{\Psi_\ell - \psi_\ell^\perp} 
	= -\lip{\Psi_k}{\psi_\ell^\perp}-\lip{\psi_k^\perp}{\Psi_\ell} + \lip{\psi_k^\perp}{\psi_\ell^\perp} 
	\le \OO(\varepsilon^2)
\]
because
$|\lip{\psi_k^\perp}{\psi_\ell^\perp}| \le \norm{\psi_k^\perp}\norm{\psi_\ell^\perp} \le \OO(\varepsilon^2)$,
by \cref{claim:relative}.
Now, observe $\norm{\psi_k}^2 \ge 1-\OO(\varepsilon^2)$, again by \cref{claim:relative}. 
\end{proof}
\end{claim}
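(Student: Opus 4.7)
My plan is to leverage the fact that the symmetric matrix $H$ has its full eigenvectors $\Psi_k, \Psi_\ell$ \emph{exactly} orthogonal whenever the eigenvalues $\zeta_k, \zeta_\ell$ are distinct, and then to transfer this orthogonality down to the $P_0$-components $\psi_k$, incurring only an $\OO(\varepsilon^2)$ error. Under \textbf{Assumption A} the $d$ fixed-point eigenvalues are distinct, so $\lip{\Psi_k}{\Psi_\ell} = 0$ for $k \ne \ell$; this is the single ingredient I need that is not already packaged in \cref{claim:relative}.

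The computation I would carry out is to write $\psi_k = \Psi_k - \psi_k^\perp$ and expand
\[
\lip{\psi_k}{\psi_\ell} \;=\; \lip{\Psi_k - \psi_k^\perp}{\Psi_\ell - \psi_\ell^\perp} \;=\; -\lip{\Psi_k}{\psi_\ell^\perp} - \lip{\psi_k^\perp}{\Psi_\ell} + \lip{\psi_k^\perp}{\psi_\ell^\perp},
\]
where the first summand $\lip{\Psi_k}{\Psi_\ell}$ has been dropped by \textbf{Assumption A}. Using $\psi_k \in \ran(P_0) \perp \ran(P_1) \ni \psi_\ell^\perp$, each of the first two cross terms collapses down to an inner product of the perpendicular parts alone, and Cauchy--Schwarz with \cref{claim:relative} bounds every surviving piece by $\norm{\psi_k^\perp}\norm{\psi_\ell^\perp} \le \OO(\varepsilon^2)$.

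Finally, I would pass from $\psi_k$ to $\tilde{\psi}_k = \psi_k/\norm{\psi_k}$: since \cref{claim:relative} also yields $\norm{\psi_k}^2 \ge 1-\OO(\varepsilon^2)$, the normalization contributes only a benign $1 + \OO(\varepsilon^2)$ multiplier, leaving the $\OO(\varepsilon^2)$ bound intact. I do not foresee a substantive obstacle. The only point deserving care is conceptual: one must recognize that distinctness of the $\zeta_k$ is what kills the leading $\lip{\Psi_k}{\Psi_\ell}$ term exactly, and that the $P_0$-$P_1$ orthogonality kills the mixed summands exactly as well; both exact cancellations are essential for the bound to come out quadratic in $\varepsilon$ rather than merely linear.
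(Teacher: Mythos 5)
Your proposal is correct and follows essentially the same route as the paper's proof: expand $\lip{\psi_k}{\psi_\ell}=\lip{\Psi_k-\psi_k^\perp}{\Psi_\ell-\psi_\ell^\perp}$, drop $\lip{\Psi_k}{\Psi_\ell}=0$ using the distinctness of the $\zeta_k$, bound the remaining terms by $\norm{\psi_k^\perp}\norm{\psi_\ell^\perp}=\OO(\varepsilon^2)$ via \cref{claim:relative}, and absorb the normalization using $\norm{\psi_k}^2\ge 1-\OO(\varepsilon^2)$. You merely make explicit two cancellations (eigenvector orthogonality from Assumption A and the $P_0$--$P_1$ orthogonality collapsing the cross terms) that the paper uses implicitly.
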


Let $V$ be a matrix whose $k$th column is the vector $\tilde{\psi}_k$, $k=1,\ldots,d$.
Consider the Gram matrix $M = V^\dagger V$ whose entries are $M_{jk} = \lip{\tilde{\psi}_j}{\tilde{\psi}_k}$.

\begin{claim} \label{claim:span}
If $d\varepsilon^2 = o(1)$, $\myspan\{\tilde{\psi}_1,\ldots,\tilde{\psi}_d\} = P_0$.

\begin{proof}
By \cref{claim:orthogonal}, $M_{k,k}=1$ and
$|M_{j,k}| \le \OO(\varepsilon^2)$. If $(d-1)\varepsilon^2 = o(1)$, $M$ is strictly diagonally dominant and
hence invertible (see \cite{hj2}, Theorem 6.1.10). Thus, $M$ has rank $n$, and so does $V$. 
As each $\tilde{\psi}_k$ lies in $P_0$ (which has dimension $d$), we are done.
\end{proof}
\end{claim}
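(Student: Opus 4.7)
The plan is to show that the $d$ vectors $\tilde{\psi}_1, \ldots, \tilde{\psi}_d$ are linearly independent; since they all live in $P_0$, which has dimension $d$, linear independence will immediately yield that they span $P_0$. Linear independence of the columns of $V$ is equivalent to invertibility of the Gram matrix $M = V^\dagger V$, so the whole argument reduces to showing $M$ is nonsingular.

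First, I would assemble the entries of $M$. The diagonal entries satisfy $M_{k,k} = \langle \tilde{\psi}_k, \tilde{\psi}_k \rangle = 1$ by definition of the normalization. For the off-diagonal entries, I apply \cref{claim:orthogonal} directly to conclude $|M_{j,k}| \le \OO(\varepsilon^2)$ whenever $j \ne k$. Next, I bound the off-diagonal row sums:
\[
\sum_{j \ne k} |M_{j,k}| \le (d-1) \cdot \OO(\varepsilon^2) = \OO(d\varepsilon^2).
\]
Under the hypothesis $d\varepsilon^2 = o(1)$, this sum is strictly less than $1 = |M_{k,k}|$ for all $k$ (for small enough $\varepsilon$), so $M$ is strictly diagonally dominant.

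The key step is then the Levy-Desplanques theorem (or equivalently the Gershgorin circle theorem applied to the eigenvalue $0$): a strictly diagonally dominant matrix is invertible. Hence $\rk(M) = d$, and since $\rk(V^\dagger V) = \rk(V)$, the matrix $V$ also has rank $d$. This means its columns $\tilde{\psi}_1, \ldots, \tilde{\psi}_d$ are linearly independent. Because each $\tilde{\psi}_k \in \ran(P_0)$ and $\dim \ran(P_0) = d$, linear independence forces $\myspan\{\tilde{\psi}_1, \ldots, \tilde{\psi}_d\} = P_0$.

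There is no real obstacle here; the argument is essentially a routine application of diagonal dominance. The only small care required is making sure the constants hidden in the $\OO(\varepsilon^2)$ estimate from \cref{claim:orthogonal} are uniform in $j,k$, so that the row-sum bound genuinely scales as $d\varepsilon^2$ rather than depending on the particular pair chosen; but since those bounds came out of \cref{claim:relative} using only $\norm{P_1}, \norm{P_1 W} \le 1$ and the uniform resolvent estimate \cref{eqn:aux-bound}, this uniformity is automatic.
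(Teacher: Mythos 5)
Your proof is correct and follows essentially the same route as the paper: form the Gram matrix $M = V^\dagger V$, use \cref{claim:orthogonal} to get unit diagonal and $\OO(\varepsilon^2)$ off-diagonal entries, invoke strict diagonal dominance (Levy--Desplanques) to conclude $M$ is invertible, and then pass to linear independence of the columns of $V$ inside the $d$-dimensional space $P_0$. The only differences are cosmetic: you spell out the row-sum bound and the identity $\rk(V^\dagger V)=\rk(V)$ explicitly, and you correctly state the rank as $d$ where the paper's text says ``rank $n$'' (evidently a typo).
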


\begin{claim} \label{claim:negligible-overlap}
For any normalized $\psi \in P_0$,
$\norm{\sum_\theta Q_\theta\psi}^2 \le \OO(d\varepsilon^2)$ for each $\theta$.

\begin{proof}
By \cref{claim:span}, $\psi = \sum_k c_k\tilde{\psi}_k$, for some $c_k$, or
$\psi = Vc$, where $c$ is the vector whose $k$th entry is $c_k$, $k=1,\ldots,d$.
Note $c \neq 0$ as $\psi \in P_0$.
As $\psi$ is normalized,
\[
	1 = \norm{\psi}_2^2 = c^\dagger V^\dagger Vc = c^\dagger Mc.
\]
But, $M = I + A$, where $|A_{jk}| = \OO(\varepsilon^2)$ for $j \neq k$, and $A_{jj}=0$, by \cref{claim:orthogonal}.
Therefore,
\[
	1 = c^\dagger Mc 
	= \norm{c}_2^2 + c^\dagger Ac
	= \norm{c}_2^2 \left(1 + \frac{c^\dagger Ac}{\norm{c}_2^2}\right).
\]
By Rayleigh's principle, the ratio $c^\dagger Ac/\norm{c}_2^2$ is bounded in absolute value by $\norm{A}_2$,
which is $\OO(d\varepsilon^2)$.
Hence, $\norm{c}_2^2 = 1 - \OO(d\varepsilon^2)$.
By Cauchy-Schwarz, $\sum_k |c_k| \le \sqrt{d}\norm{c}_2 \le \sqrt{d}$.

Since $\Psi_k = \psi_k + \psi_k^\perp$ and $\sum_\theta Q_\theta\Psi_k = 0$, we have
$\sum_\theta Q_\theta\psi_k = -\sum_\theta Q_\theta\psi_k^\perp$.
But, $\norm{\sum_\theta Q_\theta\psi_k^\perp} \le \norm{\psi_k^\perp} \le \OO(\varepsilon)$ by \cref{claim:relative},
as $\sum_\theta Q_\theta$ is a projection operator.
This shows $\norm{\sum_\theta Q_\theta\psi_k} \le \OO(\varepsilon)$.
Therefore,
\[
	\norm{\sum_\theta Q_\theta\psi} \le \sum_k |c_k|\norm{\sum_\theta Q_\theta\tilde{\psi}_k} \le \sqrt{d}\OO(\varepsilon).
\]
This shows $\norm{\sum_\theta Q_\theta\psi}^2 \le \OO(d\varepsilon^2)$.
\end{proof}
\end{claim}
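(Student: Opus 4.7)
The plan is to exploit the fact that each true eigenvector $\Psi_k$ of $H$ is exactly orthogonal to every $\ran(Q_\theta)$ (since $\zeta_k$ is distinct from every $\theta$), while $\tilde{\psi}_k = \psi_k/\norm{\psi_k}$ differs from the normalized $\Psi_k$ only by a vector of length $\OO(\varepsilon)$ --- namely, a scaled $\psi_k^\perp$, whose norm is controlled by \cref{claim:relative}. An arbitrary normalized $\psi \in P_0$ can be expanded in the near-orthonormal basis $\{\tilde{\psi}_k\}$ supplied by \cref{claim:span}, so applying $\sum_\theta Q_\theta$ term-by-term and using the triangle inequality should reduce the bound to controlling the $\ell^1$-norm of the expansion coefficients.

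Concretely, I would first write $\psi = \sum_k c_k \tilde{\psi}_k$ and analyze the Gram matrix $M = V^\dagger V$ of the $\tilde{\psi}_k$'s. By \cref{claim:orthogonal}, $M = I + A$ with $A$ zero on the diagonal and off-diagonal entries of size $\OO(\varepsilon^2)$. The normalization $1 = \norm{\psi}^2 = c^\dagger M c$ combined with the Rayleigh-quotient estimate $|c^\dagger A c| \le \norm{A}\,\norm{c}_2^2$ would then yield $\norm{c}_2^2 = 1 - \OO(d\varepsilon^2)$, after which Cauchy--Schwarz gives $\sum_k |c_k| \le \sqrt{d}\,\norm{c}_2 \le \sqrt{d}\,(1+o(1))$.

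Next, for each basis vector I would bound $\norm{\sum_\theta Q_\theta \tilde{\psi}_k}$: since $\sum_\theta Q_\theta \Psi_k = 0$ and $\sum_\theta Q_\theta$ is a projection of norm at most one,
\[
\norm{\sum_\theta Q_\theta \psi_k} = \norm{\sum_\theta Q_\theta \psi_k^\perp} \le \norm{\psi_k^\perp} = \OO(\varepsilon),
\]
and normalizing costs only a $1 + \OO(\varepsilon^2)$ factor. Combining the two ingredients through the triangle inequality,
\[
\norm{\sum_\theta Q_\theta \psi} \le \sum_k |c_k|\,\norm{\sum_\theta Q_\theta \tilde{\psi}_k} \le \sqrt{d}\cdot \OO(\varepsilon),
\]
and squaring gives the desired $\OO(d\varepsilon^2)$ bound.

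The one delicate point I expect is converting the entrywise estimate $|A_{jk}| = \OO(\varepsilon^2)$ into a spectral-norm bound for the Rayleigh step; the crude estimate $\norm{A} \le d\cdot\OO(\varepsilon^2)$ is what forces the scaling regime $d\varepsilon^2 = o(1)$ already assumed in \cref{claim:span}. Outside this regime, $M$ could become singular, the basis $\{\tilde{\psi}_k\}$ would collapse, and the entire expansion of $\psi$ would cease to be meaningful, so the smallness of $d\varepsilon^2$ is essential rather than merely technical.
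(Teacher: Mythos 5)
Your proposal is correct and follows essentially the same route as the paper's proof: expand $\psi$ in the near-orthonormal basis $\{\tilde{\psi}_k\}$, control $\norm{c}_2$ via the Gram matrix $M = I + A$ and a Rayleigh-quotient bound, use $\sum_\theta Q_\theta \Psi_k = 0$ together with \cref{claim:relative} to bound each $\norm{\sum_\theta Q_\theta \tilde{\psi}_k}$ by $\OO(\varepsilon)$, and finish with the triangle inequality and Cauchy--Schwarz. The only (harmless) refinement you add is making explicit the $1+\OO(\varepsilon^2)$ cost of normalizing $\psi_k$ to $\tilde{\psi}_k$, which the paper elides.
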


\begin{claim} \label{claim:tail-bound}
For any normalized states $\psi_a,\psi_b \in P_0$, we have
$|\lip{\psi_b}{\sum_\theta e^{i\theta}Q_\theta\psi_a}| \le \OO(\sqrt{d}\varepsilon)$.

\begin{proof}
Let $\tilde{Q} = \sum_\theta e^{i\theta}Q_\theta$ and note $\tilde{Q}^\dagger\tilde{Q} = \sum_\theta Q_\theta$.
We have
\[
	|\lip{\psi_b}{\tilde{Q}\psi_a}|^2 
	\le \norm{\psi_b}^2 \norm{\tilde{Q}\psi_a}^2 \\
	\le \lip{\psi_a}{\tilde{Q}^\dagger\tilde{Q}\psi_a} \\
	\le \lip{\psi_a}{\sum_\theta Q_\theta\psi_a}.
\]
As $\sum_\theta Q_\theta$ is a projector, the last expression equals $\norm{\sum_\theta Q_\theta\psi_a}^2$
which is at most $\OO(d\varepsilon^2)$ by \cref{claim:negligible-overlap}.
\end{proof}
\end{claim}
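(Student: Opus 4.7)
The plan is to reduce the desired bound to Claim \ref{claim:negligible-overlap} by a Cauchy--Schwarz argument that exploits the fact that the weights $e^{i\theta}$ all have unit modulus. Write $\tilde{Q} = \sum_\theta e^{i\theta} Q_\theta$ for the phase-weighted operator. Since the $Q_\theta$ are orthogonal projectors onto distinct eigenspaces of $H$, the cross terms $Q_\theta Q_{\theta'}$ for $\theta \neq \theta'$ vanish, and because $|e^{i\theta}|^2 = 1$ the surviving diagonal terms collapse to $\tilde{Q}^\dagger \tilde{Q} = \sum_\theta Q_\theta$, which is itself a genuine orthogonal projector. This identity is the one structural observation that carries the proof.

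Given this, I would apply Cauchy--Schwarz to get $|\lip{\psi_b}{\tilde{Q}\psi_a}|^2 \le \|\psi_b\|^2 \, \|\tilde{Q}\psi_a\|^2$, then use $\|\psi_b\| = 1$ together with $\|\tilde{Q}\psi_a\|^2 = \lip{\psi_a}{\tilde{Q}^\dagger \tilde{Q}\psi_a} = \lip{\psi_a}{\sum_\theta Q_\theta \psi_a}$. Because $\sum_\theta Q_\theta$ is a projector, this last quantity is exactly $\|\sum_\theta Q_\theta \psi_a\|^2$, which Claim \ref{claim:negligible-overlap} has already shown to be at most $\OO(d\varepsilon^2)$ for any normalized $\psi_a \in P_0$. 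Taking square roots yields the target estimate $\OO(\sqrt{d}\,\varepsilon)$.

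There is no serious obstacle once Claim \ref{claim:negligible-overlap} is in hand; the present statement is essentially a Cauchy--Schwarz wrapper that converts the norm bound on $\sum_\theta Q_\theta \psi_a$ into a matrix-element bound that is uniform over every choice of unit-modulus phases $\{e^{i\theta}\}$. The reason for casting it in this phase-weighted form is forward-looking: in the subsequent transport analysis the phases will be specialized to $e^{-it\theta}$, so that $\tilde{Q}$ becomes the restriction of $U(t) = e^{-itH}$ to the complementary spectral part, and this claim will then control the leakage of the effective dynamics on $P_0$ into the rest of the spectrum uniformly in $t$.
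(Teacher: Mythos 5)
Your argument is correct and is essentially identical to the paper's proof: both establish $\tilde{Q}^\dagger\tilde{Q} = \sum_\theta Q_\theta$ via orthogonality of the eigenprojectors and unit-modulus phases, apply Cauchy--Schwarz, and invoke \cref{claim:negligible-overlap} on $\norm{\sum_\theta Q_\theta\psi_a}^2$. No differences worth noting.
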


\par\noindent
To obtain an orthonormal basis, we apply Gram-Schmidt to the vectors $\tilde{\psi}_1,\ldots,\tilde{\psi}_d$.
Suppose $\rho_1,\ldots,\rho_d$ are the resulting vectors (which span $P_0$).
Note that 
\[
	\rho_k = \tilde{\psi}_k + \OO_d(\varepsilon^2).
\]
The notation $\OO_d(\varepsilon^2)$ hides constant factors that depend on $d$;
if $d = \OO(1)$, this is simply $\OO(\varepsilon^2)$.

The {\em effective Hamiltonian} $\heff$ (relative to $P_0$) is defined as
\begin{equation} \label{eqn:reduced-ham}
\heff := \sum_{k=1}^{d} \zeta_k \rho_k^\dagger \rho_k.
\end{equation}
For two normalized states $\phi_a,\phi_b \in P_0$, we have
\begin{align} 
\lip{\phi_b}{e^{-itH}\phi_a} 
	&= \sum_{k=1}^{d} e^{-it\zeta_k} \lip{\phi_b}{\Psi_k} \lip{\Psi_k}{\phi_a} 
		+ \lip{\phi_b}{\sum_\theta e^{-it\theta} Q_\theta\phi_a} \\
	&= \sum_{k=1}^{d} e^{-it\zeta_k} \lip{\phi_b}{\Psi_k} \lip{\Psi_k}{\phi_a} + \OO_d(\varepsilon),
		\ \ \mbox{ by \cref{claim:tail-bound}}.
\end{align}
Since $\phi_a \in P_0$ and $\Psi_k = \psi_k + \psi_k^\perp$, we get
\[
	\lip{\Psi_k}{\phi_a} 
	= \lip{\psi_k}{\phi_a}
	= \norm{\psi_k} \lip{\tilde{\psi}_k}{\phi_a}
\]
and similarly for $\phi_b$.
Therefore,
\begin{align}
\lip{\phi_b}{e^{-itH}\phi_a} 
	&= \sum_{k=1}^{d} e^{-it\zeta_k} \norm{\psi_k}^2 
		\lip{\phi_b}{\tilde{\psi}_k} \lip{\tilde{\psi}_k}{\phi_a} + \OO_d(\varepsilon) \\
	&= \sum_{k=1}^{d} e^{-it\zeta_k} \norm{\psi_k}^2 
		\lip{\phi_b}{\rho_k} \lip{\rho_k}{\phi_a} + \OO_d(\varepsilon) \\
\label{eqn:qwalk-on-h}
	&= (1-\OO(\varepsilon^2)) \lip{\phi_b}{e^{-it\heff}\phi_a} + \OO_d(\varepsilon)
\end{align}
as $\norm{\psi_k}^2 = 1-\Omega(\varepsilon^2)$ by \cref{claim:relative}.

So, if there is state transfer from $\phi_a$ to $\phi_b$ via $\heff$, then
there is state transfer between the two states via $H$, or
\begin{equation} \label{eqn:good-fidelity}
	|\lip{\phi_b}{e^{-itH}\phi_a}| \ge 1-\OO_d(\varepsilon).
\end{equation}
The fidelity is arbitrarily close to $1$ because $\varepsilon = \delta\kappa = o(1)$.

This shows state transfer on $H$ can be reduced to a (smaller) state transfer on $\heff$.
Unfortunately, the location of each perturbed eigenvalue $\zeta_k$ (relative to $\lambda_0$) 
is only known {\em asymptotically}. Perfect (and pretty good) state transfer requires exact conditions 
on the eigenvalues and eigenprojectors (see \cite{cg}).
We will show how to handle these problems for two-fold degeneracy.

\section{Constructing High-fidelity State Transfer}

The following notion of high-fidelity state transfer is relevant to our results.

\begin{definition} (High-fidelity state transfer)\\
Let $G$ be a graph with adjacency matrix $A(G)$. Let $a,b$ be two vertices of $G$.
We say $G$ has {\em high-fidelity state transfer} from $a$ to $b$ if for any $\epsilon \in (0,1)$
there is an effectively computable time $t=t(\epsilon)$ so that
\begin{equation}
|\bra{b}e^{-iA(G)t}\ket{a}| \ge 1-\epsilon.
\end{equation}
\end{definition}

In most cases, $t(\epsilon)$ is a polynomial function in
$1/\epsilon$ which may also depend on some properties of the underlying graph $G$ (through $A(G)$, say).
We will adopt the convenient Dirac notation which is standard in quantum information.
Thus, $\ket{\alpha}$ denotes a normalized column vector, 
while the row vector $\bra{\alpha} = \ket{\alpha}^\dagger$ denote its conjugate transpose. 
The dimensions will be clear from context.
For $\delta > 0$ and $x \in \RR$, let $B_\delta(x)$ denote the interval $\{y : |y-x| < \delta\}$.

\begin{definition} \label{def:cospectral} ($\gamma$-cospectrality)\\
Let $G$ be a graph with a nonsingular adjacency matrix $A$.
Two normalized vectors $\alpha,\beta \in \CC^{V(G)}$ are {\em $\gamma$-cospectral} in $G$ if there exists $\eta > 0$, with $B_\eta(0) \subset \rho(G)$, so that
\[
	|\bra{\alpha}R(\lambda,A)\ket{\alpha} - \bra{\beta}R(\lambda,A)\ket{\beta}| 
		= \gamma_0 |\bra{\alpha}R(\lambda,A)\ket{\beta}|,
        \ \ \ \mbox{for all $\lambda \in B_\eta(0)$}
\]
for some $\gamma_0 \le \gamma$.
\end{definition}

For a graph $G$, we use $R(\lambda,G)$ in place of $R(\lambda,A)$.
Recall that we say two vertices $a,b$ in $G$ are {\em cospectral} if their walk counts match, that is,
$\bra{a}A^k\ket{a} = \bra{b}A^k\ket{b}$ for all integers $k \ge 1$. It is known that cospectrality is 
a necessary condition for perfect (and pretty good) state transfer between $a$ and $b$ (see \cite{cg}).
Our notion of $\gamma$-cospectrality is weaker than cospectrality,
as $\bra{\alpha}A^k\ket{\alpha} = \bra{\beta}A^k\ket{\beta}$, for $k \ge 0$,
implies $\bra{\alpha}R(\lambda,G)\ket{\alpha} = \bra{\beta}R(\lambda,G)\ket{\beta}$, 
for $\lambda \in \rho(G)$.

We state our main result in this section.

\begin{theorem} \label{thm:high-fidelity}
Let $G_0$ be a connected graph with adjacency matrix $A$, with $\norm{A}=1$, and condition number $\kappa < \infty$.
Let $G = G_0 \cup \overline{K}_2$ be a disjoint union with adjacency matrix $H_0$.
Suppose $\ket{\alpha},\ket{\beta}$ are $(2\gamma)$-cospectral vectors in $G_0$ 
and $\delta \in (0,1)$ satisfy 
\begin{equation} \label{eqn:weak-coupling}
\delta\kappa \ll 1
\end{equation}
and
\begin{equation} \label{eqn:strong-inverse}
|\bra{\alpha}A^{-1}\ket{\beta}| \gg \delta^2\kappa^3.
\end{equation}
If $a,b$ are the vertices of $\overline{K}_2$, 
let $W = \delta(\ketbra{a}{\alpha}+\ketbra{\alpha}{a}+\ketbra{b}{\beta}+\ketbra{\beta}{b})$.
Then, 
\begin{align}
	|\bra{b}e^{-i\tau (H_0 + \delta W)}\ket{a}| &= \frac{1-o(1)}{\sqrt{1+\gamma^2}} - o(1)
\end{align}
where
\begin{equation} \label{eqn:time}
\tau = \OO\left(\frac{1}{\delta^2|\bra{\alpha}A^{-1}\ket{\beta}|}\right).
\end{equation}
\end{theorem}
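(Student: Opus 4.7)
The plan is to apply the Feshbach--Schur framework of Section~3 with the two-dimensional degenerate subspace $P_0 = \ketbra{a}{a} + \ketbra{b}{b}$ and complement $P_1 = \identity - P_0$ (supported on $V(G_0)$). Since $H_0$ is block diagonal with $\overline{K}_2$ contributing eigenvalue zero of multiplicity two, $P_0$ is exactly the $\lambda_0 = 0$ eigenspace of $H_0$, and $(H_0)_{P_1} = A$ satisfies $d(0,\sigma(A)) \ge 1/\kappa$ by the condition-number hypothesis. The perturbation $W$ only couples $\ket{a} \leftrightarrow \ket{\alpha}$ and $\ket{b} \leftrightarrow \ket{\beta}$, so $W_{00} = W_{11} = 0$, verifying all three items of \textbf{Assumption~B}. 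The weak-coupling hypothesis~(\ref{eqn:weak-coupling}) gives $\varepsilon = \delta\kappa = o(1)$, activating the error estimates of Section~3.2.

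Direct computation in the $\{\ket a, \ket b\}$ basis collapses the Feshbach map to an explicit $2\times 2$ matrix of resolvent matrix elements of $A$,
\[
F(\lambda) = \delta^2 \begin{pmatrix} \bra{\alpha} R(\lambda,A)\ket{\alpha} & \bra{\alpha} R(\lambda,A)\ket{\beta} \\ \bra{\beta} R(\lambda,A)\ket{\alpha} & \bra{\beta} R(\lambda,A)\ket{\beta} \end{pmatrix}.
\]
Writing $x = \bra{\alpha}A^{-1}\ket{\alpha}$, $y = \bra{\beta}A^{-1}\ket{\beta}$, $z = \bra{\alpha}A^{-1}\ket{\beta}$, we have $F(0) = -\delta^2 \begin{pmatrix} x & z \\ \bar z & y \end{pmatrix}$. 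By \cref{lemma:eigval-estimate} the fixed-point eigenvalues satisfy $|\zeta_k| = \OO(\delta^2\kappa)$, and the first resolvent identity~(\ref{eqn:1st-resolvent}) together with~(\ref{eqn:aux-bound}) gives $\|F(\zeta_k) - F(0)\| = \OO(\delta^4\kappa^3)$. The non-resonance hypothesis~(\ref{eqn:strong-inverse}) says exactly that this error is asymptotically negligible compared to the off-diagonal scale $\delta^2|z|$ of $F(0)$.

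Consequently, both the fixed-point eigenvalues $\zeta_k$ and eigenvectors $\psi_k \in P_0$ agree, up to relative error $o(1)$, with those of the linearized effective Hamiltonian $\heff := F(0)$. A standard two-level Rabi calculation yields
\[
|\bra{b}e^{-i\tau \heff}\ket{a}| = \frac{|z|}{r}\,|\sin(\delta^2 r \tau)|, \qquad r = \sqrt{(x-y)^2/4 + |z|^2},
\]
maximized at $\tau = \pi/(2\delta^2 r) = \OO(1/(\delta^2|z|))$ since $r \ge |z|$, which matches~(\ref{eqn:time}). Evaluating the $(2\gamma)$-cospectrality of $\alpha,\beta$ at $\lambda = 0$ gives $|x - y| \le 2\gamma|z|$, whence $r \le |z|\sqrt{1+\gamma^2}$ and therefore $|z|/r \ge 1/\sqrt{1+\gamma^2}$.

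Finally I lift from $\heff$ to $H$ through approximation~(\ref{eqn:qwalk-on-h}) with $d = 2$:
\[
|\bra{b} e^{-i\tau H}\ket{a}| \ge (1 - \OO(\varepsilon^2))\frac{1}{\sqrt{1+\gamma^2}} - \OO(\varepsilon) = \frac{1 - o(1)}{\sqrt{1+\gamma^2}} - o(1),
\]
which is the stated conclusion. I expect the main obstacle to be the quantitative passage from the fixed-point equation $F(\zeta_k)\psi_k = \zeta_k\psi_k$ to the linear eigenproblem for $\heff$: the operator-norm bound $\|F(\zeta_k) - F(0)\| = \OO(\delta^4\kappa^3)$ must be converted into spectral-gap and eigenvector bounds on the $2\times 2$ matrix $\heff$ that stay meaningful at the off-diagonal scale $\delta^2|z|$, and this is precisely where hypothesis~(\ref{eqn:strong-inverse}) is used. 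Without it, the self-consistent correction could swamp the Rabi coupling and destroy the two-level oscillation picture on which the fidelity bound depends.
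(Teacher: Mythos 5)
Your proposal is correct and follows essentially the same route as the paper: Feshbach--Schur reduction onto $P_0=\ketbra{a}{a}+\ketbra{b}{b}$, explicit computation of the $2\times 2$ map $F(\lambda)$ in resolvent matrix elements, a two-level Rabi calculation giving fidelity $1/\sqrt{1+\gamma^2}$ at time $\OO(1/(\delta^2|\bra{\alpha}A^{-1}\ket{\beta}|))$, control of the $\lambda$-dependence via the first resolvent identity and hypothesis~(\ref{eqn:strong-inverse}), and the lift to $H$ via~(\ref{eqn:qwalk-on-h}). The only cosmetic difference is that the paper normalizes $F(\lambda)$ to $\delta^2\omega(\lambda)(\gamma Z+X)$ and pushes the $\lambda\to 0$ approximation into the transfer time $\tau_0=\tau(1+o(1))$, whereas you keep the raw entries $x,y,z$ and compare $F(\zeta_k)$ to $F(0)$ in operator norm --- both uses of~(\ref{eqn:strong-inverse}) are equivalent.
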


\begin{proof}
The adjacency matrix of the disjoint union $G=G_0\cup \overline{K}_2$ is $H_0 = O_2 \oplus A$.
The Hamiltonian $H = H_0 + \delta W$ encodes a weak coupling 
between the isolated vertices $a,b$ and the base graph $G_0$.

By Equation \eqref{eqn:weak-coupling}, $d(0,\sigma(G_0)) \ge 1/\kappa$.
Using \cref{lemma:resolvent-bound} and \cref{eqn:aux-bound}, 
for all $\lambda \not\in \sigma(G_0)$ we obtain
\begin{equation} \label{eqn:resolvent-bound}
\norm{R(\lambda,G_0)} \le \kappa.
\end{equation}
The two-dimensional $0$-eigenspace of $H_0$ is spanned by the characteristic vectors $\ket{a}$ and $\ket{b}$.
Let $P_0 = \ketbra{a}{a}+\ketbra{b}{b}$ be the projection onto $\nul(H_0)$ and $P_1 = I - P_0$.

The effective Hamiltonian $\heff = F(\lambda)$ (or the Feshbach-Schur map) is given by
\begin{align*}
F(\lambda) &= \delta^2 W_{01} (\lambda I - A)^{-1} W_{10} \\
	&= \delta^2 (\ketbra{a}{\alpha}+\ketbra{b}{\beta}) (\lambda I - A)^{-1} (\ketbra{\alpha}{a}+\ketbra{\beta}{b}) \\
	&= \delta^2 
		\begin{pmatrix}
		\bra{\alpha}R(\lambda, G_0)\ket{\alpha} & \bra{\alpha}R(\lambda, G_0)\ket{\beta} \\
		\bra{\beta}R(\lambda, G_0)\ket{\alpha} & \bra{\beta}R(\lambda, G_0)\ket{\beta} 
		\end{pmatrix},
		\ \ \ \mbox{ expressed in the $\{\ket{a},\ket{b}\}$ basis.}
\end{align*}
The diagonal entries of $F(\lambda)$ are equal to within $2\epsilon$ due to $(2\epsilon)$-cospectrality 
while the off-diagonal entries are equal since $A(G_0)$ is symmetric.
Furthermore, by a diagonal scaling (which only introduces an irrelevant global phase factor) we may assume that
\begin{equation} \label{eqn:trex-cometh}
F(\lambda) 
	= \delta^2\omega(\lambda) \begin{pmatrix} \gamma & 1 \\ 1 & -\gamma \end{pmatrix},
	\ \hspace{0.2in} \
	\mbox{ where $\omega(\lambda) := |\bra{\alpha}R(\lambda,G_0)\ket{\beta}|$.}
\end{equation}
Up to the $\delta^2\omega(\lambda)$ factor, $F(\lambda) = \gamma Z + X$, where $X,Z$ are the Pauli matrices. 
Therefore,
\[
	e^{-it(\gamma Z + X)} 
	= 
	\cos(t\sqrt{1+\gamma^2})\identity - \frac{i}{\sqrt{1+\gamma^2}} \sin(t\sqrt{1+\gamma^2}) (\gamma Z + X).
\]
Thus, if we take
\begin{equation} \label{eqn:sin-fidelity}
\tau = \frac{\pi}{2} \frac{1}{\delta^2 \omega(\lambda)},
\end{equation}
we get
\begin{equation} 
	\bra{b}e^{-i\tau F(\lambda)}\ket{a}
	=
	-\frac{i}{\sqrt{1+\gamma^2}} \sin(\tau\sqrt{1+\gamma^2}).
\end{equation}
As the exact location of $\lambda$ is unknown, we use $\lambda_0=0$ to approximate it.
Notice that
\begin{align*}
	|\omega(\lambda) - \omega(0)|
		&= |\bra{\beta}(R(\lambda,G_0) - R(0,G_0))\ket{\alpha}| \\
		&\le \norm{R(\lambda,G_0) - R(0,G_0)}, \ \ \ \mbox{ property of the norm} \\
		&= |\lambda|\norm{R(\lambda,G_0)}\norm{R(0,G_0)}, \ \ \ \mbox{ by the first resolvent formula} \\
		&= \OO(|\lambda|\kappa^2), \ \ \ \mbox{ as each resolvent is at most $\kappa$ by \cref{eqn:resolvent-bound}.}
\end{align*}
By an application of \cref{lemma:eigval-estimate}, we have $|\lambda| \le \delta^2\kappa$
which further implies 
\begin{equation} \label{eqn:resolvent-gap}
	|\omega(\lambda) - \omega(0)| \le \OO(\delta^2\kappa^3).
\end{equation}
Thus, in \cref{eqn:sin-fidelity}, we take instead
\begin{equation} \label{eqn:pst-time}
	\tau_0 := \frac{\pi}{2} \frac{1}{\delta^2\omega(0)}.
\end{equation}
Notice that 
\[	
	\tau_0 = \tau(1 + o(1)) 
\]
which follows from $\omega(\lambda)/\omega(0) = 1+o(1)$ because
\[
	\left|\frac{\omega(\lambda)}{\omega(0)} - 1\right| 
	= \frac{|\omega(\lambda) - \omega(0)|}{|\omega(0)|} 
	= o(1),
\]
by \cref{eqn:resolvent-gap} and \cref{eqn:strong-inverse}.
Therefore, we have
\begin{equation} \label{eqn:fidelity}
|\bra{b}e^{-i\tau\heff}\ket{a}| 
	= \frac{1}{\sqrt{1+\gamma^2}} \left|\sin\left(\frac{\pi}{2} (1 + o(1))\right)\right| 
	= \frac{1}{\sqrt{1+\gamma^2}} - o(1). 
\end{equation}
Combining this with \cref{eqn:qwalk-on-h}, we get
\[
	|\bra{b}e^{-i\tau H}\ket{a}| \ge \frac{1-o(1)}{\sqrt{1+\gamma^2}} - o(1)
\]
which completes the proof since $\delta\kappa \ll 1$.
\end{proof}

The explicit appearance of $\bra{\alpha}A^{-1}\ket{\beta}$ in \cref{thm:high-fidelity} was 
the main inspiration behind the simple quantum algorithm for matrix inversion described in \cite{kt}.

We show an illustration of the \trex effects for state transfer on the paths $P_n$ in \cref{fig:p57}.
For further examples and analyses, see \cref{sec:hit}.

\begin{figure}[t]
\includegraphics[scale=0.5]{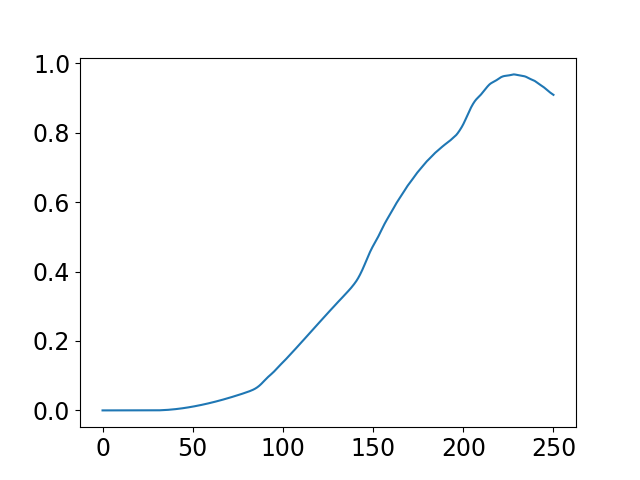}
\includegraphics[scale=0.5]{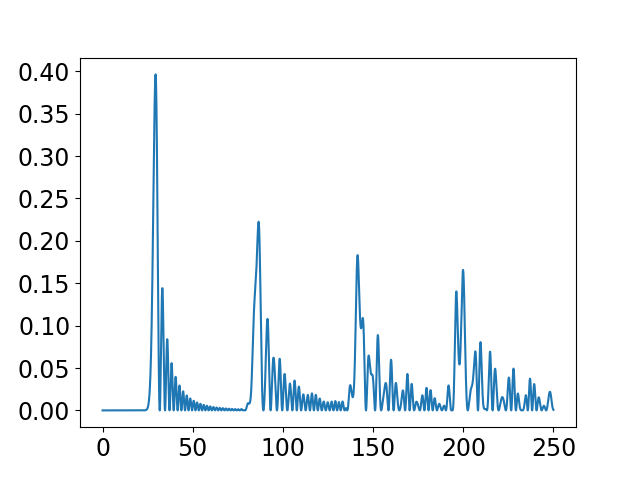}
\caption{The power of weak coupling (\trex effects) in quantum state transfer on $P_{n}$. 
Here, $n=55$ as an example and we plot time $t$ (on X axis) versus antipodal fidelity (on Y axis).
{\em Left}: \trex arms with strength $\delta = 0.05$.
{\em Right}: uniform couplings (fidelity did not reach $0.40$; there is no PGST \cite{gkss12}).}
\label{fig:p57}
\end{figure}

\subsection{Weak coupling versus strong potential}\label{sec:strong-loop}

For high-fidelity state transfer, we compare the \trex method (weak edge couplings) 
with methods based on strong potential (large self-loop weights) (see \cite{ls25,ls24}).
We restate the main results in Lippner and Shi \cite{ls25}.
Let $G$ be a graph with maximum degree $m$ and let $a,b$ be a pair of vertices in $G$ of distance $d$ from
each other and with cospectrality $c$ (see Definition 2.6 in \cite{ls25}).
If we attach self-loops of weight $Q$ to both $a$ and $b$ where
\[
	Q > \frac{16}{\epsilon^{1/\alpha}}m^{1+\beta}
\]
with $\alpha = \min\{2,c-d+1\}$ and $\beta = \max\{1/2,d/(c-d+1)\}$, 
then there is state transfer from $a$ to $b$ with fidelity at least $1-\epsilon$ (see Theorem 1.4 in \cite{ls25})
in time $t_0$ (see Theorem 1.6 in \cite{ls25}) where
\[
	t_0 < 2\pi(Q+m)^{d-1}.
\]
Here, state transfer time scales exponentially with the distance $d$ (or diameter, in the worst case).
Intuitively, this method increases the potential on $a$ and $b$ in such a way that the graph $G$ becomes
negligible (which facilitates state transfer) but at the expense of the distance (of the order $\OO(1/\epsilon^d)$).
In contrast, the state transfer time for the \trex method depends mainly on the condition number of $A(G)$.
For example, for the Krawtchouk chain of length $n$, the condition number is $\OO(n)$ and for the unweighted path
$P_n$ the condition number is $\OO(n^2)$. Thus, the state transfer time of our scheme is {\em polynomial} in the
diameter.

\subsection{Transport in the Presence of Localization}

Anderson \cite{a58} discovered a localization phenomenon in light propagating through randomly defective crystals. 
In random matrix theory, this is modeled by Jacobi matrices under random diagonal perturbations.
The works of \cite{klmw07,redsm16} gave strong evidence that quantum state transfer on spin chains 
is adversely affected by such localization. 
Almeida \etal \cite{adl18} had studied the \trex method under a structured disorder model.

Our main result shows, at the cost of four protected vertices (see \cref{fig:anderson3}),
there is a protocol for high-fidelity state transfer on spin chains robust against {\em arbitrary} 
(in particular, asymmetric) random diagonal perturbations.

\begin{theorem} \label{thm:anderson3}
Let $H$ be a Jacobi matrix of order $n+2$ with $\norm{H}=1$. 
Let $D = \diag(\zeta_k)_{k=1}^{n+2}$ be a diagonal matrix where $\zeta_2,\ldots,\zeta_{n+1}$ are arbitrary random entries
and with $\zeta_1 = \zeta_{n+2} = 0$.
Let $H_1$ be a Jacobi matrix of order $n+4$ obtained from $H + D$ by attaching two pendant edges of weight $\delta$ each.
Suppose $a$ is the first vertex of $H$ (or the second vertex of $H_1$).
Then, there exists $B \in \RR$ so that high-fidelity quantum state transfer occurs on $H_1 + B\ketbra{a}{a}$
between antipodal vertices of $H_1$.

\begin{proof}
Let $a$ and $b$ be the first and the last vertices of $H$, respectively. We attach a self-loop of weight $B$ to $a$.
The cospectrality parameter (see \cref{def:cospectral}) is given by
\[
	\epsilon(B) 
	= \frac{\bra{a}H^{-1}\ket{a} - \bra{b}H^{-1}\ket{b}}{\bra{a}H^{-1}\ket{b}}
	= \frac{\det(H_{a,a}) - \det(H_{b,b})}{\det(H_{b,a})}
\]
where $H_{b,a}$ denotes the matrix obtained from $H$ by deleting row $b$ and column $a$
(see \cite{hj2}, Chapter 0, Sections 3.1 and 8.4). 
If $\omega_1,\ldots,\omega_{n+1}$ are the edge weights of $H$, then $\det(H_{b,a}) = \prod_{k=1}^{n+1} \omega_k$.
Then,
\[
	\epsilon(B) = (\prod_{k=1}^{n+1} w_k)^{-1} (\alpha + \beta B)
\]
where $\alpha$ and $\beta$ are functions of the unknown loop weights $\zeta_1,\ldots,\zeta_n$ 
and the known edge weights $w_1,\ldots,w_{n+1}$. So, $\epsilon(B)$ is linear in $B$.

For any given $H$, a value of $B=-\alpha/\beta$ exists that sets $\epsilon(B)=0$. Even if we don't know all the parameters of $H$, using \cref{thm:high-fidelity}, we could perform a state transfer experiment on $H$ to estimate $\epsilon(0)$ and $\epsilon(1)$,
which allows us to recover $\alpha$ and $\beta$. This implies antipodal state transfer with fidelity of $1-o_n(1)$ by \cref{thm:high-fidelity}.
\end{proof}
\end{theorem}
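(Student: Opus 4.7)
The plan is to apply \cref{thm:high-fidelity} with the base graph $G_0$ being the disordered chain $H + D$ augmented by a single self-loop of weight $B$ at vertex $a$, using the two weight-$\delta$ pendant edges as the \trex arms. The outer endpoints of the arms are the antipodal vertices of $H_1$, so certifying HFST between them reduces to verifying the cospectrality hypothesis of \cref{thm:high-fidelity} for the inner endpoints $a,b$ of the chain, together with the nonsingularity and matrix-element bounds \cref{eqn:weak-coupling}--\cref{eqn:strong-inverse}.

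The central step is to argue that a single scalar parameter $B$ suffices to force $(2\gamma)$-cospectrality with $\gamma=0$. Set $M(B) := H + D + B\ketbra{a}{a}$. By Cramer's rule, the relevant resolvent entries at $\lambda=0$ are
\[
\bra{a}M(B)^{-1}\ket{a} = \frac{\det(M_{a,a})}{\det(M)}, \qquad \bra{b}M(B)^{-1}\ket{b} = \frac{\det(M_{b,b})}{\det(M)}, \qquad \bra{a}M(B)^{-1}\ket{b} = \pm\frac{\det(M_{b,a})}{\det(M)}.
\]
Since $B$ modifies only the $(a,a)$ entry, $\det(M_{a,a})$ (which deletes row and column $a$) is independent of $B$; $\det(M_{b,a})$ (which also deletes column $a$) is independent of $B$ and equals $\prod_k w_k$, where the $w_k$ are the known edge weights; only $\det(M_{b,b})$ depends on $B$, and it does so affinely because $B$ appears as a single diagonal entry in that submatrix. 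Hence the normalized cospectrality defect
\[
\epsilon(B) := \frac{\det(M_{a,a}) - \det(M_{b,b})}{\det(M_{b,a})} = \alpha + \beta B
\]
is affine in $B$, and the choice $B^\star := -\alpha/\beta$ enforces exact cospectrality at $\lambda=0$. A brief Taylor-expansion of the resolvent in $\lambda$, combined with the $|\lambda|=O(\delta^2\kappa)$ estimate of \cref{lemma:eigval-estimate}, upgrades this to the $(2\gamma)$-cospectrality statement required by \cref{def:cospectral} with $\gamma$ vanishingly small.

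The remaining wrinkle is that the random entries $\zeta_2,\dots,\zeta_{n+1}$ are unknown, so $\alpha$ and $\beta$ cannot be computed a priori. I would exploit the \trex transfer itself as a calibration: run the protocol with two trial values, say $B=0$ and $B=1$, read off $\epsilon(0)$ and $\epsilon(1)$ from the observed deviation of the peak fidelity from $1/\sqrt{1+\gamma^2}$ predicted by \cref{eqn:fidelity}, then solve the resulting $2\times 2$ linear system for $\alpha,\beta$, and finally run the protocol with $B = B^\star$.

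The step I expect to be the main obstacle is controlling the auxiliary hypotheses of \cref{thm:high-fidelity} \emph{after} the self-loop has been inserted. The corner matrix element $\bra{a}A^{-1}\ket{b}$ that appears in \cref{eqn:strong-inverse} is comfortably under control, because $\det(M_{b,a}) = \prod_k w_k$ is determined entirely by the known edge weights and is untouched by the disorder or by $B$. The truly delicate piece is the condition number $\kappa$ of $H + D + B^\star\ketbra{a}{a}$: in the worst case over disorder realizations $B^\star$ could be large and could bring a new eigenvalue near $0$, violating the weak-coupling regime $\delta\kappa \ll 1$. Ruling this out either requires a genericity/probabilistic argument on the disorder or, more likely in the spirit of this paper, a simple qualitative invocation that for typical bounded disorder the affine calibration produces a moderate $B^\star$ and the trex regime remains intact. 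Once this is granted, \cref{thm:high-fidelity} closes the proof.
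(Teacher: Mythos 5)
Your proposal follows essentially the same route as the paper: the Cramer's-rule computation showing the cospectrality defect $\epsilon(B)$ is affine in $B$ (since only $\det(M_{b,b})$ sees the self-loop), the choice $B^\star=-\alpha/\beta$, the experimental calibration at $B=0$ and $B=1$ to recover $\alpha,\beta$, and the final appeal to \cref{thm:high-fidelity}. Your closing worry about the condition number of the disordered chain after inserting the self-loop is a legitimate point that the paper's own proof leaves implicit rather than resolves, so you have not missed anything the paper supplies.
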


\begin{figure}[t]
\begin{center}
\begin{tikzpicture}[-,>=stealth',auto,node distance=1.75cm,thick,
	main node/.style={circle,circular drop shadow,draw,font=\bfseries}, 
	clean node/.style={circle,draw,font=\bfseries}, 
	main edge/.style={-,>=stealth'}]
\tikzset{every loop/.style={min distance=10mm, in=55, out=125}}
  \node[clean node] (1) {};
  \node[clean node] (2) [right of=1] {};
  \node[main node] (3) [right of=2] {};
  \node[main node] (4) [right of=3] {};
  \node[main node] (5) [right of=4] {};
  \node[main node] (6) [right of=5] {};
  \node[main node] (7) [right of=6] {};
  \node[clean node] (8) [right of=7] {};
  \node[clean node] (9) [right of=8] {};

  \path[shorten >=1pt]
    (1) edge node {$\delta$} (2)
    (2) edge node {} (3)
    (7) edge node {} (8)
    (8) edge node {$\delta$} (9);
  \path[main edge]
    (3) edge node {} (4)
    (4) edge node {} (5)
    (5) edge node {} (6)
    (6) edge node {} (7);

  \path[-]
    (2) edge [loop above] node {$B$} (2)
    (3) edge [loop above] node {$\zeta_1$} (3)
    (4) edge [loop above] node {$\zeta_2$} (4)
    (5) edge [loop above] node {$\ldots$} (5)
    (6) edge [loop above] node {$\zeta_{n-1}$} (6)
    (7) edge [loop above] node {$\zeta_n$} (7);
\end{tikzpicture}
\caption{State transfer protocol robust against Anderson localization:
the first and last two vertices (unshaded) are protected from noise 
while the middle vertices (shaded) are under the influence of Anderson localization.
The control parameter $B$ is used to ensure cospectrality.
}
\label{fig:anderson3}
\vspace{.2in}
\end{center}
\end{figure}
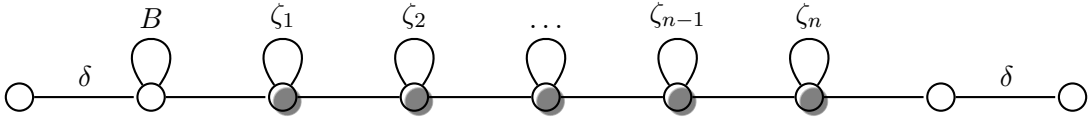

In \cref{fig:p57-anderson}, we illustrate the efficacy of the state transfer protocol 
used in \cref{thm:anderson3}.

\begin{figure}[h]
\includegraphics[scale=0.175]{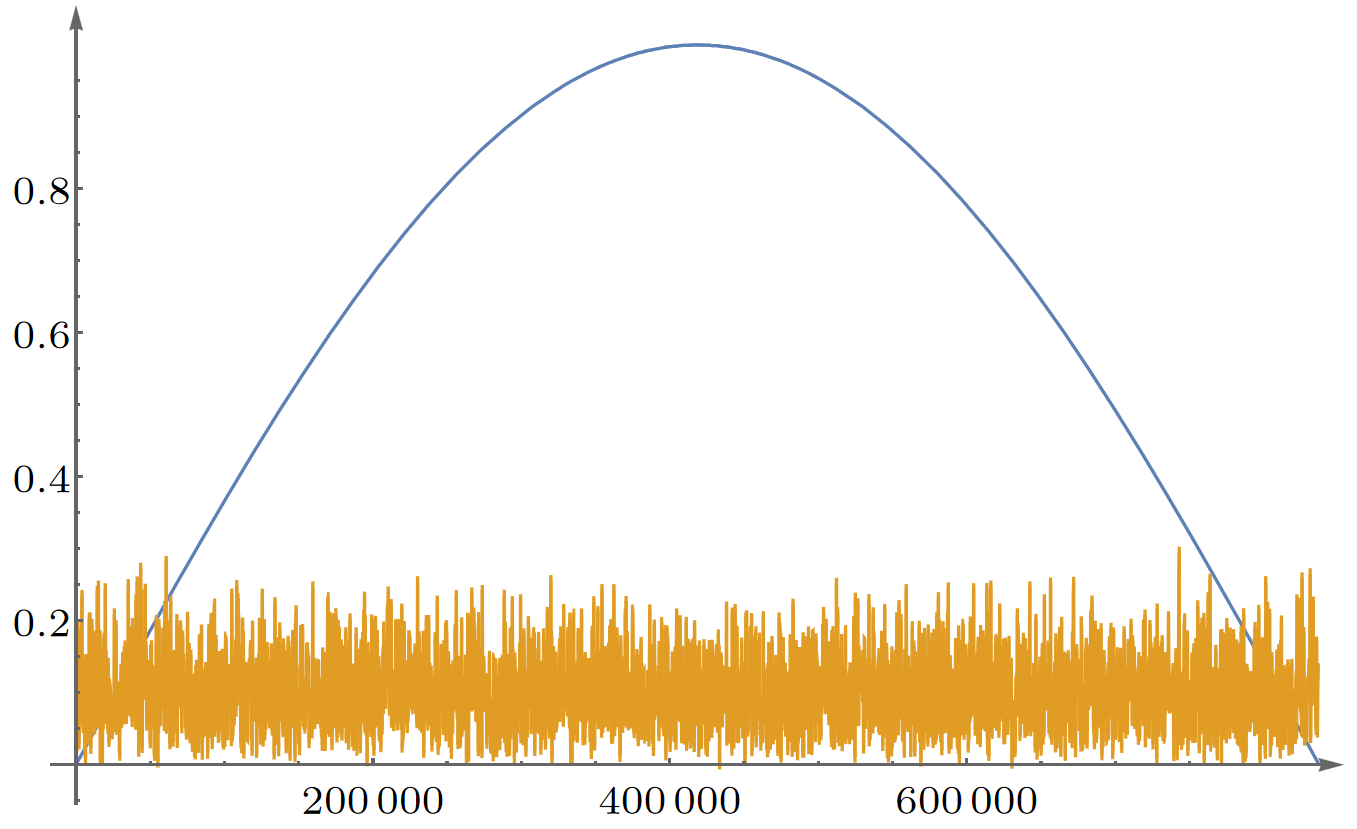}
\includegraphics[scale=0.2]{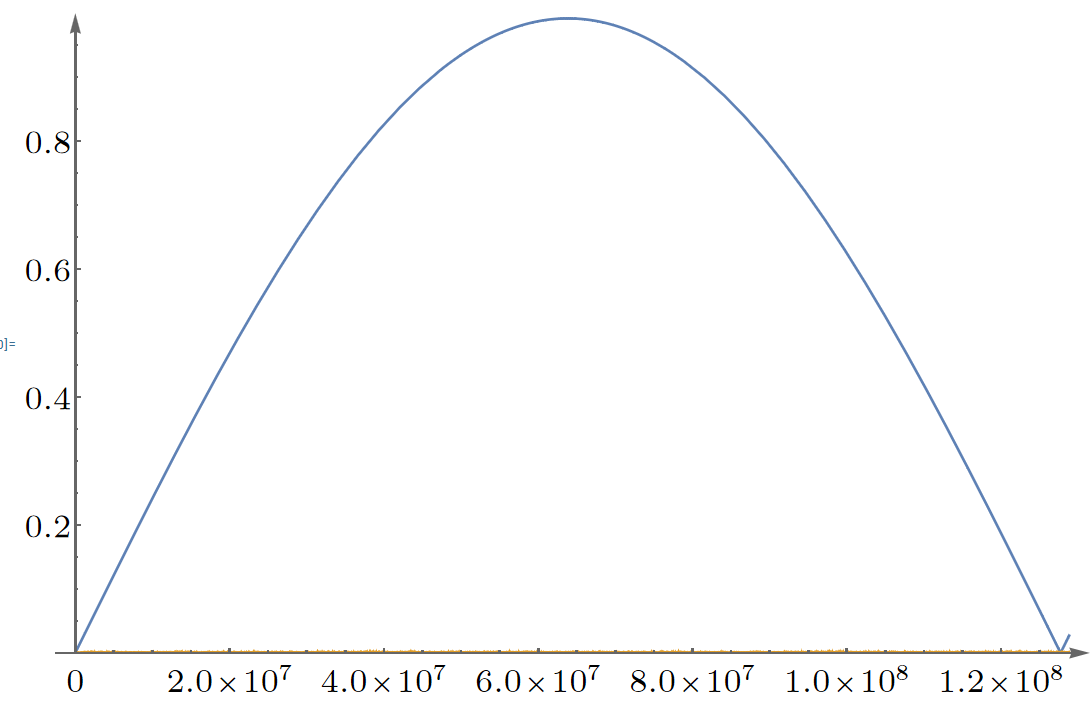}
\caption{The \trex protocol against Anderson localization on $P_{n}$.
Here, $n=55$ where we plot time $t$ (x-axis) versus antipodal fidelity (y-axis).
{\em Left}: \trex against Cauchy noise with parameter $0.06$ (as in \cite{klmw07}) with $\delta=0.002$ and fidelity $0.9997$.
{\em Right}: \trex against noise from $U(-2,2)$ with $\delta=0.0067$ and fidelity $0.992$ 
(localization practically destroys antipodal quantum transport).
}
\label{fig:p57-anderson}
\end{figure}

\newcommand{\xa}{{a}}
\newcommand{\xb}{{b}}
\section{Quantum Speedup for Hitting Times}\label{sec:hit}

We consider a random walk on an undirected connected graph $G$ which induces a finite irreducible and
reversible Markov chain with stationary distribution $\pi$ (see Aldous and Fill \cite{aldous-fill}, Chapter 4).
For a vertex $b$, let $T_b$ denote the number of steps until the random walk reaches $b$ for the first time.
Let $\Exp_a[T_b]$ denote the expected value of $T_b$ when we start the random walk at vertex $a$.
The average hitting time of $G$ is defined as
\begin{equation} \label{eqn:aht}
\tau_0 = \sum_{a,b} \pi_a \pi_b \Exp_a[T_b].
\end{equation} 
The maximal mean commute time of $G$ is defined as
\begin{equation} \label{eqn:mmct}
\tau^\star = \max_{a,b} (\Exp_a[T_b] + \Exp_b[T_a]).
\end{equation}

The values of $\tau^\star$ and $\tau_0$ for standard families of graphs are given in \cref{tab:hit_times}
(based on Chapter 5 in \cite{aldous-fill}).
We compare these classical notions of hitting times with the quantum hitting time obtained by attaching \trex arms 
to vertices $a$ and $b$ of the graph $G$. For a fixed threshold $\varrho$, we let $\tau_Q$ be the minimum $t$ 
so that the quantum walk satisfies $f^2_{a,b}(t) \ge \varrho$.

In \cref{tab:hit_times}, we observe quantum speedups in $\tau_Q$ compared to $\tau^\star$ (and $\tau_0$) 
on certain families of graphs.

\begin{table}[h]
\centering
\begin{tabular}{cccc}
\toprule
Graph 			& $\tau^\star$	& $\tau_0$ 	& $\tau_Q$ \\
\midrule
cycle 			& $N^2$			& $N^2$ 	& $N^{3/2}\delta^{-1}$ \\
path  			& $N^2$			& $N^2$ 	& $N^{3/2}\delta^{-1}$ \\
complete graph 	& $N$			& $N$ 		& $\sqrt{N}\delta^{-1}\ ^\star$ \\
star 			& $N$			& $N$ 		& $N\delta^{-2}\ ^\star$ \\
barbell 		& $N^3$			& $N^3$ 	& $N^{5/2}\delta^{-1}$ or $N\delta^{-2}$\\
lollipop 		& $N^3$			& $N^2$ 	& $N^{3/2}\delta^{-1}$ \\
necklace 		& $N^2$			& $N^2$ 	& $N^{3/2}\delta^{-1}$\\
$d$-cube ($d=\log_2 N$) & $N$	& $N$ & $\log N$ \\
dense regular graphs	& $N$	& $N$ & $\sqrt{N}\delta^{-1}\kappa^{-1}$ \\
rook's graph 	& $N$			& $N$ 		& $N\delta^{-2}\ ^\star$ \\
\bottomrule
\end{tabular}
\vspace{0.2in}
\caption{Comparison of classical ($\tau^\star,\tau_0$) and quantum hitting times using \trex ($\tau_Q$) for various graphs of size $N$. 
The bounds are asymptotic in $N$. The coupling parameter $\delta$ is $\order{1}$ but small.
$^\star$ indicates that further speed enhancements are available if $\delta$ is increased beyond $\order{1}$. 
}
\label{tab:hit_times}
\end{table}

\subsection{Resonant tunneling} \label{sec:resonant}
We describe {\em resonant tunneling}, a variation of the \trex method, which works (under mild assumptions) 
for singular base graphs.
Suppose the base graph $G_0$ has an adjacency matrix $H_0$ with zero as a {\em simple} eigenvalue and $\ket{\rho}$ as its corresponding eigenvector.
Consider the disjoint union $G = G_0 \cup \overline{K}_2$ and suppose $a$ and $b$ are the vertices of the coclique $\overline{K}_2$;
here, the adjacency matrix of $G$ is $H = H_0 \oplus O_2$.

Choose two vertices $\alpha$ and $\beta$ in $G_0$ and attach the \trex arms through the coupling
$W = \ketbra{a}{\alpha} + \ketbra{\alpha}{a} + \ketbra{b}{\beta} + \ketbra{\beta}{b}$.
Now, we consider the perturbed Hamiltonian 
\[
	H(\delta) = H + \delta W.
\]
Let $P_0 = \ketbra{a}{a}+\ketbra{b}{b}+\ketbra{\rho}{\rho}$ be the projection onto the zero eigenspace of $H$.
Suppose $P_0(\delta)$ is the eigenprojector of $H(\delta)$ corresponding to $P_0$ after perturbation.
Let $U$ be the direct rotation operator from $P_0(\delta)$ to $P_0$. This unitary operator is called 
the {\em Schrieffer-Wolff} transformation (see Bravyi \etal \cite{bdl11}, Definition 3.1).
In this framework, the effective Hamiltonian relative to $P_0$ is given by
\begin{equation} \label{eqn:heff-sw}
	\heff = P_0U(H + \delta W)U^\dagger P_0.
\end{equation}
If $\delta$ is sufficiently small and the spectral gap between $0$ and the other eigenvalues of $H_0$ are large enough, then
\[
	e^{-it\heff} = U^\dagger e^{-it(H + \delta W)}U.
\]
provided the initial and target states of interest are both in $P_0$. This is because $P_0(\delta)$ is an invariant
subspace of $H + \delta W$. Moreover, up to a first-order approximation of $U$, we get (see \cite{bdl11}, Equation 3.13)
\begin{equation} \label{eqn:heff-approx}
	\heff = P_0 H P_0 + \delta P_0 W P_0.
\end{equation}
In our \trex setting, under the $\{\ket{a},\ket{\rho},\ket{b}\}$ basis, we have
\[
	\heff =
	\epsilon\Delta
	\begin{pmatrix}
	0 & \braket{\alpha}{\rho} & 0 \\
	\braket{\alpha}{\rho} & 0 & \braket{\beta}{\rho} \\
	0 & \braket{\beta}{\rho} & 0
	\end{pmatrix}
\]
where $\Delta = \min\{|\lambda| : \lambda \in \Sp(G_0), \lambda \neq 0\}$ (which is included so that
the effective Hamiltonian has a bounded norm) and
$\epsilon$ is a sufficiently small constant.
Assuming $\braket{\alpha}{\rho} \neq 0$ and $\braket{\alpha}{\rho} = \braket{\beta}{\rho}$, 
we see that the quantum walk on $\heff$ maps $\ket{a}$ to $\ket{b}$ in time 
\[
	\frac{\pi}{\sqrt{2}}\frac{1}{\epsilon\Delta |\braket{\alpha}{\rho}|}.
\]

\par\noindent{\em Remark}:
The Laplacian $L = D-A$ of a connected undirected graph $G$ of order $N$ has zero as a simple eigenvalue 
with the normalized all-one vector as eigenvector. Thus, resonant tunnelling occurs in time 
$\OO(\sqrt{N}\epsilon^{-1}\Delta^{-1})$ where $\Delta = \lambda_2(G)$ is the algebraic connectivity of $G$.

\subsection{Examples}
In this section, we analyze a subset of families of graphs (as listed in \cref{tab:hit_times}) 
and derive their quantum hitting times under the \trex method. 
We omit the other families as the analyses are similar.

\subsubsection{Path}
The path $P_N$ has vertex set $\{1,\ldots,N\}$ with edges $(j,k)$ provided $|j-k|=1$.
The eigenvalues of $P_N$ are $\lambda_j = 2\cos(j\pi/(N+1))$ (of multiplicity one)
with eigenvector whose $k$-th entry is $\braket{k}{\lambda_j} = \sqrt{\frac{2}{N+1}}\sin(jk\pi/(N+1))$,
where $j,k=1,\ldots,N$.

If $N$ is odd, $\lambda_{k}=0$ for $k=(N+1)/2$ and the zero-eigenvector $\ket{\rho}$ satisfies
and $|\braket{1}{\rho}| = \sqrt{2/(N+1)} \sim 1/\sqrt{N}$.
Moreover, $\Delta = 2\sin(\pi/(N+1)) \sim 1/N$ as the closest eigenvalues to zero are 
$\lambda_{k \pm 1} = 2\cos((k \pm 1)\pi/(N+1)) = \mp 2\sin(\pi/(N+1))$.
By resonant tunneling, the quantum hitting time is $\order{N^{3/2}\delta^{-1}}$.
In \cite{wlkggb05}, a quantum hitting time that scales linearly with $N$ was claimed
but we were unable to corroborate this.

\subsubsection{Cycle}
The cycle $C_N$ has $\ZZ_N = \{0,1,\ldots,N-1\}$ as its vertices and $(k,k \pm 1\pmod{N})$ as its edges.
If $N$ is even, where $N = 2m$ for $m$ even, then the quotient of $C_N$ (modulo an equitable partition)
is a weighted path $\tilde{P}_{m+1}$ of odd length (which differs from the unweighted path $P_{m+1}$ by
the $\sqrt{2}$ weights on the pendant edges). As the quantum walks on $C_N$ and on $\tilde{P}_{m+1}$
are equivalent relative to end-to-end transfer, we may appeal to resonant tunneling for $P_{m+1}$.
So, the quantum hitting time is $\order{N^{3/2}\delta^{-1}}$.

\subsubsection{Clique}
The complete graph (or clique) $K_N$ contains all edges $(k,\ell)$, whenever $1 \le k \neq \ell \le N$.
The quotient of $K_N$ with cells $\{1\}$, $\{2,\ldots,N-1\}$ and $\{N\}$ is a weighted path given by
\begin{equation} \label{eqn:clique}
	J = 
	\begin{pmatrix}
	0 & \sqrt{N-2} & 0 \\
	\sqrt{N-2} & N-3 & \sqrt{N-2} \\
	0 & \sqrt{N-2} & 0
	\end{pmatrix}.
\end{equation}
The eigenvalues of $J$ are approximately $\sqrt{N-2}(c \pm d)$, where $c = \tfrac{1}{2}(N-3)/\sqrt{N-2}$ 
and $d = \sqrt{c^2 + 2}$, and $0$. The zero eigenvector $\ket{\rho}$ satisfies $|\braket{1}{\rho}| = 1/\sqrt{2}$. 
We apply resonant tunneling with $\Delta \sim |c - d| \sim 1/\sqrt{N}$ (after dividing by the spectral norm).
Therefore, the quantum hitting time is $\order{\sqrt{N}\delta^{-1}}$.

\subsubsection{Hypercube}
The $d$-cube $Q_d$ has vertex set $\ZZ_2^d$ (thus, $N=2^d$) and contains edges of the form $(\alpha,\alpha \oplus e_k)$, 
for $k=1,\ldots,d$, where $\oplus$ denotes the bitwise exclusive OR operation.
From Christandl \etal \cite{cdel04,christandl2005}, it is known that perfect state transfer occurs between 
antipodal vertices of $Q_d$ in time $\tfrac{\pi}{2}\log_2 N$ (without the \trex arms).
This result was already known in the earlier work of Moore and Russell; see the appendix in \cite{mr02}.
We obtain the same $\OO(\log_2 N)$ bound by using \trex arms.

\subsubsection{Rook's graph}
The rook's graph is defined as the Cartesian product $K_m \Box K_m$ (hence $N=m^2$).
It is a strongly regular graph whose eigenvalues are $2m-2$ (of multiplicity one), $m-2$ (of multiplicity $2m-2$), 
and $-2$ (of multiplicity $(m-1)^2$). For vertices $\alpha=(1,1)$ and $\beta=(m,m)$, we get
$|\bra{\beta}A(K_m \Box K_m)^{-1}\ket{\alpha}| \sim 1/m^2$. Hence, the quantum hitting time is 
$\order{N\delta^{-2}}$ which shows no asymptotic speedup.


\subsubsection{Barbell}\label{sec:barbell}
The barbell graph $B_N$ is obtained from the disjoint union $2K_N \cup P_N$ by identifying the leaf vertices of $P_N$
with two vertices chosen from distinct cliques. 
Let $\alpha$ and $\beta$ be two vertices from the two $K_N$s that are not attached to the path $P_N$.
The barbell has a quotient structure (see \cref{fig:barbell}) described by a weighted path of $N+4$ vertices, 
whose adjacency matrix is a Jacobi matrix $H$ given by
\begin{equation} \label{eqn:barbell-quotient}
	H = 
	\begin{pmatrix}
	0 			& \sqrt{N-2} 	& 0				&			&			&				& 				& \\
	\sqrt{N-2} 	& N-3 			& \sqrt{N-2} 	&			&			&				& 				& \\
	0			& \sqrt{N-2}	& 0				& 1			&			&				& 				& \\
				&				& 1				& 0			& 1			&				& 				& \\
	\vdots		& \vdots		& \vdots		& \vdots	& \vdots	& \vdots		& \vdots 		& \vdots \\
				&				&				& 1			& 0			& 1				&				& \\
				&				&				&			& 1			& 0				& \sqrt{N-2} 	& 0 \\
				&				&				&			&			& \sqrt{N-2}	& N-3			& \sqrt{N-2} \\
				&				&				&			&			& 0				& \sqrt{N-2}	& 0
	\end{pmatrix}
\end{equation}
To ensure bounded norm, we need to scale $H$ by $1/N$.
Depending on the parity of $N$, we show a comparison of the two \trex methods:
resonant tunneling versus Feshbach-Schur.

\vspace{0.1in}
\par\noindent{\em Case 1: $N$ is odd}.
Observe that $\tfrac{1}{N}H$ has zero as a simple eigenvalue with eigenvector $\ket{\rho}$ whose entries are given by 
\begin{equation} \label{eqn:resonant-entry}
	\braket{k}{\rho} = \sqrt{\tfrac{2}{N+5}} \times \iverson{\mbox{$k$ is odd}} \times (-1)^{\lfloor k/2\rfloor}
\end{equation}
for $k=1,\ldots,N+4$. This allows us to apply resonant tunneling.
To estimate the spectral gap to the zero eigenvalue, note the cliques at the end of the barbell
have similar structure to \cref{eqn:clique} which has distance $\kappa \sim 1/N$ to eigenvalue zero. 
But, the eigenvalues of the path $P_N$, after scaling by $1/N$, has gap $1/N^2$. This yields an effective Hamiltonian
\begin{equation} \label{eqn:heff-barbell-odd}
	\heff = 
	\frac{\delta}{N^2}
	\begin{pmatrix}
	0 & \braket{\alpha}{\rho} & 0 \\
	\braket{\alpha}{\rho} & 0 & \braket{\beta}{\rho} \\
	0 & \braket{\beta}{\rho} & 0
	\end{pmatrix}.
\end{equation}
As $\braket{\alpha}{\rho} = \braket{\beta}{\rho} = \sqrt{2/(N+5)} \sim 1/\sqrt{N}$ for $\alpha = 1$ and $\beta = N+4$ 
(by \cref{eqn:resonant-entry}), this gives a quantum hitting time of $\OO(N^{5/2}\delta^{-1})$.

\begin{figure}[t]
\begin{center}
\begin{tikzpicture}[-,>=stealth',auto,node distance=1.75cm,thick,
       main node/.style={circle,circular drop shadow,draw,font=\bfseries}, main edge/.style={-,>=stealth'}]
\tikzset{every loop/.style={min distance=10mm, in=55, out=125}}
  \node[main node] (1) {};
  \node[main node] (2) [right of=1] {};
  \node[main node] (3) [right of=2] {};
  \node[main node] (4) [right of=3] {};
  \node[main node] (5) [right of=4] {};
  \node[main node] (6) [right of=5] {};
  \node[main node] (7) [right of=6] {};
  \node[main node] (8) [right of=7] {};
  \node[main node] (9) [right of=8] {};

  \path[shorten >=1pt]
    (1) edge node {$a$} (2)
    (2) edge node {$a$} (3)
    (7) edge node {$a$} (8)
    (8) edge node {$a$} (9);
  \path[main edge]
    (3) edge node {$c$} (4)
    (4) edge node {$c$} (5)
    (5) edge node {$c$} (6)
    (6) edge node {$c$} (7);

  \path[-]
    (2) edge [loop above] node {$b$} (2)
    (8) edge [loop above] node {$b$} (8);
\end{tikzpicture}
\caption{The quotient structure of the barbell graph.
The unnormalized barbell has $a=\sqrt{N-2}$, $b=N-3$, and $c=1$.
}
\label{fig:barbell}
\vspace{.2in}
\hrule
\end{center}
\end{figure}

\ignore{
\begin{figure}[b]
\centering
\drawchainwithfields[0.45\textwidth]{$\sqrt{N-2}$,$\sqrt{N-2}$,1,1,1,1,$\sqrt{N-2}$,$\sqrt{N-2}$}{,$N-1$,,,,,,$N-1$,}
\caption{Quotient structure of the barbell on $3N$ vertices: two cliques $K_N$ joined by a path $P_N$ 
(intermediate vertices not shown).}\label{fig:barbell}
\end{figure}
}

\vspace{0.1in}
\par\noindent{\em Case 2: $N$ is even}.
We apply the \trex method using the Feshbach-Schur map.
Consider the Jacobi matrix $J$, parametrized by $a$, $b$, and $c$, given by (see \cref{fig:barbell})
\begin{equation} \label{eqn:barbell-jacobi}
	J = J(a,b,c) = 
	\begin{pmatrix}
	0 & a &   &   &   &   &   &   &  \\
	a & b & a &   &   &   &   &   &  \\
	  & a & 0 & c &   &   &   &   &  \\
	  &   & c & 0 & c &   &   &   &  \\
	  &   & \vdots &  & \vdots & \vdots & & \vdots & \\
	  &   &   &   &   & 0 & c &   & \\
	  &   &   &   &   & c & 0 & a & \\
	  &   &   &   &   &   & a & b & a \\
	  &   &   &   &   &   &   & a & 0 \\
	\end{pmatrix}
\end{equation}
Recall $J^{-1} = \Adj(J)/\det(J)$, where $\Adj(J)$ is the adjugate of $J$ (see \cite{hj2}, Section 0.8.2).
Therefore, we have $\bra{\alpha}J^{-1}\ket{\beta} = \det(J_{\beta,\alpha})/\det(J)$,
where $J_{\beta,\alpha}$ is $J$ with row $\beta$ and column $\alpha$ removed. 
By direct calculation, we see that $\det(J_{\beta,\alpha})=a^4 c^{N-1}$ and $\det(J)=\pm a^4c^N$.
Therefore,
\[
	|\bra{\alpha}J^{-1}\ket{\beta}| = c^{-1}.
\]
A similar calculation shows $\det(J_{\alpha,\alpha})=\det(J_{\beta,\beta})=a^2 c^N$, which implies
\[
	|\bra{\alpha}J^{-1}\ket{\alpha}| = |\bra{\beta}J^{-1}\ket{\beta}| = \frac{1}{a^{2}}.
\]
The effective Hamiltonian is thus given by
\begin{equation} \label{eqn:heff-barbell-even}
	\heff = 
	\frac{\delta^2}{\kappa}
	\begin{pmatrix}
	\bra{\alpha}J^{-1}\ket{\alpha} & \bra{\alpha}J^{-1}\ket{\beta} \\
	\bra{\beta}J^{-1}\ket{\alpha} & \bra{\beta}J^{-1}\ket{\beta}
	\end{pmatrix}
	=
	\frac{\delta^2}{\kappa}
	\begin{pmatrix}
	a^{-2} & c^{-1} \\
	c^{-1} & a^{-2}
	\end{pmatrix}.
\end{equation}
After a diagonal shift (which can be done without loss of generality), we have $\heff = (\delta^2/c\kappa) X$.
For the normalized barbell, we have $a=1/\sqrt{N}$, $b=1$, $c=1/N$.
Furthermore, the normalized barbell has condition number $\kappa \sim \OO(N^2)$.
Thus, we obtain a quantum hitting time of $\OO(N\delta^{-2})$ (which is nearly optimal due to
Lieb-Robinson bounds \cite{bravyi2006,nachtergaele2010}).

\subsection{Quantum Search on Edge}

In quantum spatial search (see \cite{fg98,cg03}), given a graph $G=(V,E)$ with adjacency matrix $A$,
we ask if there is $\gamma$ (global scaling) so that for any vertex $\alpha \in V$ (oracle location), 
we have $\tau$ (search time) such that
\[
	|\bra{\alpha}e^{-i\tau(\gamma A + \ketbra{\alpha}{\alpha})}\ket{\rho}| \ge 1-\epsilon
\]
where $\ket{\rho}$ is the normalized Perron eigenvector of $G$.
This captures an analog variant of Grover search on arbitrary graphs. 
The quadratic speedup requirement is expressed as $\tau = \OO(1/|\braket{\alpha}{\rho}|)$.
Typically, we require $\epsilon = o(1)$ but it can be relaxed to a bounded constant.
In this framework, spatial search can be viewed as reverse mixing (as we transform $\ket{\rho}$ to $\ket{\alpha}$)
under the rank-one perturbation of $A$ (induced by the oracle position).

Imagine a new framework where the oracle, instead of being a strong potential located at a vertex,
is now a pendant edge weakly coupled to $G$. 
To find this pendant edge (or oracle), we randomly place another pendant edge on the graph 
(which, in the vertex model, is the Random Target Lemma \cite{lpw}).
Next, we appeal to the \trex argument to claim a high-fidelity state transfer between the two pendant edges.
For families of vertex-transitive graphs, we may use the hitting times given in Table \ref{tab:hit_times}.
In contrast, the same idea but with strong vertex potentials will run into the same issues as outlined
in \cref{sec:strong-loop}.

\section{Conclusions}

The \trex idea is a paradoxical, yet powerful, method for solving quantum transport problems.
We gave a mathematical treatment of the \trex idea using the Feshbach-Schur method from perturbation theory \cite{gs20}. 
A summary of our results and related open questions is given in the following:
\begin{enumerate}
\item An elementary and rigorous proof that the \trex method yields high-fidelity state transfer.\\
	The requirements on the base graph is modest: it should have good condition number and its two vertices involved
	in the state transfer must be $\gamma$-cospectral.
	In contrast to perfect and pretty-good state transfer (see \cite{cg}), the latter condition is less stringent
	than {\em strong} cospectrality. 
    Also, see \cite{cgs} for a related work on peak state transfer.

\item The transfer time of the \trex protocol does not depend directly on the diameter.\\
	In contrast, the transfer time of protocols based on strong potentials increase exponentially 
	with the diameter (see \cite{ls24,ls25}). 
	Is there a speed limit for high-fidelity state transfer?
	 
\item A high-fidelity state transfer protocol robust against localization in spin chains.\\
	The key idea was to exploit nonsingularity of the underlying spin chains along with a few protected vertices.
	Our protocol was able to overcome {\em arbitrary} noise (while possibly trading off state transfer time), in sharp contrast to the negative results in \cite{klmw07}.

\item A simple method to find speedups in hitting times.\\
	The versatility of the \trex method was evident from the variety of graphs that were analyzed 
	(thereby avoiding {\em ad hoc} methods).
	We then derived a quantum search algorithm to locate an oracle hiding as a pendant edge. 
\end{enumerate}
To complement the above list, we mention other related results inspired by the \trex paradigm:
\begin{enumerate}[resume]
\item A simpler matrix inversion quantum algorithm \cite{kt}.\\
	This is based on attaching multiple \trex arms to a nonsingular base graph.
\item An optimal PST protocol robust against timing sensitivities on spin chains \cite{kkt}.\\
	This exploits a link between weak couplings and the second derivative of the PST curve.
\end{enumerate}
We strongly suspect there are further uses of the \trex method waiting to be unearthed.

\section*{Acknowledgments}

C.T. is supported by NSF grant OSI-2427020.
We thank Ada Chan for bringing the Random Target Lemma to our attention and
Sooyeong Kim for helpful discussions on the barbell graphs.



\begin{thebibliography}{1}

\bibitem{acde04}
C. Albanese, M. Christandl, N. Datta, A. Ekert,
Mirror Inversion of Quantum States in Linear Registers,
\textit{Physical Review Letters} {\bf 93}:230502, 2004.

\bibitem{aldous-fill}
D. Aldous, J. Fill,
Reversible Markov Chains and Random Walk on Graphs,
Unfinished monograph, recompiled 2014, available at \url{http://www.stat.berkeley.edu/$\sim$aldous/RWG/book.html}.

\bibitem{adl18}
G. Almeida, F. de Moura, M. Lyra,
High-fidelity state transfer through long-range correlated disordered quantum channels,
\textit{Physics Letters A} {\bf 382}:1335, 2018.

\bibitem{a58}
P. Anderson,
Absence of Diffusion in Certain Random Lattice Models,
\textit{Physical Review} {\bf 109}(5):1492-1505, 1958.

\bibitem{bose2003}
S. Bose,
Quantum Communication through an Unmodulated Spin Chain,
\textit{Physical Review Letters} {\bf 91}:207901, 2003.

\bibitem{bdl11}
S. Bravyi, D. DiVincenzo, D. Loss,
Schrieffer-Wolff transformations for quantum many-body systems,
\textit{Annals of Physics} {\bf 326}:2793-2826, 2011.

\bibitem{bravyi2006}
S. Bravyi, M. Hastings, F. Verstraete,
Lieb-Robinson Bounds and the Generation of Correlations and Topological Quantum Order,
\textit{Physical Review Letters} {\bf 97}:050401, 2006.

\bibitem{cmf16}
X. Chen, R. Mereau, D. Feder,
Asymptotically perfect efficient quantum state transfer across uniform chains with two impurities,
\textit{Physical Review A} {\bf 93}:012343, 2016.

\bibitem{ccdfgs03}
A.~Childs, R.~Cleve, E.~Deotto, E.~Farhi, S.~Gutmann, and D.~Spielman,
{Exponential Algorithmic Speedup by a Quantum Walk},
\textit{Proceedings of the thirty-fifth annual ACM symposium on Theory of computing}, p59-68, 2003.

\bibitem{cg03}
A. Childs, J. Goldstone,
Spatial search by quantum walk,
\textit{Physical Review A} {\bf 70}:022314, 2003.

\bibitem{cdel04}
M. Christandl, N. Datta, A. Ekert, A. Landahl,
Perfect State Transfer in Quantum Spin Networks,
\textit{Physical Review Letters} {\bf 92}(18):187902, 2004.

\bibitem{christandl2005}
M. Christandl, N. Datta, T. Dorlas, A. Ekert, A. Kay, A. Landahl,
Perfect Transfer of Arbitrary States in Quantum Spin Networks,
\textit{Physical Review A} {\bf 71}(3):032312, 2005.

\bibitem{cg}
G. Coutinho, C. Godsil,
\textit{Graph Spectra and Continuous Quantum Walks},
book draft.

\bibitem{cgs}
G. Coutinho, K. Guo, V. Schmeits,
Peak state transfer in continuous quantum walks,
\href{https://arxiv.org/abs/2505.11986}{arxiv:2505.11986}.

\bibitem{dss21}
G. Dusson, I. Sigal, B. Stamm,
The Feshbach-Schur Map and Perturbation Theory,
\textit{Partial Differential Equations, Spectral Theory, and Mathematical Physics:
The Ari Laptev Anniversary Volume}, p65-88, EMS, 2021.

\bibitem{fg98}
E. Farhi, S. Gutmann,
Analog analogue of a digital quantum computation,
\textit{Physical Review A} {\bf 57}(4):2403, 1998.

\bibitem{fg98b}
E. Farhi, S. Gutmann,
Quantum Computation and Decision Trees,
\textit{Physical Review A} {\bf 58}(2):915, 1998.

\bibitem{gkss12}
C. Godsil, S. Kirkland, S. Severini, J. Smith,
Number-Theoretic Nature of Communication in Quantum Spin Systems,
\textit{Physical Review Letters} {\bf 109}:050502, 2012.

\bibitem{gs20}
S. Gustafson, I. Sigal,
\textit{Mathematical Concepts of Quantum Mechanics}, 3rd ed.,
Springer, 2020.

\bibitem{hj2}
R. Horn, C. Johnson,
\textit{Matrix Analysis}, 2nd ed.,
Cambridge University Press, 2013.

\bibitem{ks05}
P. Karbach, J. Stolze,
Spin chains as perfect quantum state mirrors,
\textit{Physical Review A} {\bf 72}:030301, 2005.

\bibitem{k80}
T. Kato,
\textit{Perturbation Theory for Linear Operators},
Springer, 1980.

\bibitem{k10}
A. Kay,
A Review of Perfect State Transfer and its Application as a Constructive Tool,
\textit{International Journal of Quantum Information} {\bf 8}:641, 2010.

\bibitem{kkt}
A. Kay, S. Kim, C. Tamon,
Optimising Perfect Quantum State Transfer for Timing Insensitivities,
\href{https://arxiv.org/abs/2507.18872}{arxiv:2507.18872}.

\bibitem{kt}
A. Kay, C. Tamon,
Matrix Inversion by Quantum Walk,
\href{https://arxiv.org/abs/2508.06611}{arxiv:2508.06611}.

\bibitem{klmw07}
J.P. Keating, N. Linden, J.C.F. Matthews, A. Winter,
Localization and its consequences for quantum walk algorithms and quantum communication,
\textit{Physical Review A} {\bf 76}:012315, 2007.

\bibitem{kempe2002}
J. Kempe,
Discrete Quantum Walks Hit Exponentially Faster,
\textit{Probability Theory and Related Fields} {\bf 133}:215-235, 2005.

\bibitem{lpw}
D. Levin, Y. Peres, E. Wilmer,
\textit{Markov Chains and Mixing Times},
American Mathematical Society, 2009.

\bibitem{ls24}
G. Lippner, Y. Shi,
Quantifying state transfer strength on graphs with involution,
\textit{Quantum Information Processing} {\bf 23}:166, 2024.

\bibitem{ls25}
G. Lippner, Y. Shi,
Strong quantum state transfer on graphs via loop edges,
\textit{Linear Algebra and Its Applications} {\bf 704}:77-91, 2025.

\bibitem{mr02}
C. Moore, A. Russell,
Quantum walks on the Hypercube,
\textit{Proc. 6th International Workshop on Randomization and Approximation Techniques in Computer Science
(RANDOM 2002)}, 164-178, 2002.

\bibitem{nachtergaele2010}
B. Nachtergaele, R. Sims,
Lieb-Robinson Bounds in Quantum Many-Body Physics,
``Entropy and the Quantum,'' Robert Sims and Daniel Ueltschi (Eds), 
\textit{Contemporary Mathematics}, volume 529, American Mathematical Society, p141-176, 2010.

\bibitem{redsm16}
R. Ronke, M. Estarellas, I. D'Amico, T. Spiller, T. Miyadera,
Anderson localisation in spin chains for perfect state transfer,
\textit{Eur. Phys. J. D} {\bf 70}:189, 2016.

\bibitem{sz07}
J. Sjostrand, M. Zworski,
Elementary Linear Algebra for Advanced Spectral Problems,
\textit{Annales de L'Institut Fourier} {\bf 57}(7):2095-2141, 2007.

\bibitem{t14}
G. Teschl,
\textit{Mathematical Methods in Quantum Mechanics}, 2nd ed.,
American Mathematical Society, 2014.

\bibitem{vz12}
L. Vinet, A. Zhedanov,
Almost perfect state transfer in quantum spin chains,
\textit{Physical Review A} {\bf 86}:052319, 2012.

\bibitem{vz24}
L. Vinet, A. Zhedanov,
Quantum state transfer with sufficient fidelity,
\href{https://arxiv.org/abs/2412.02321}{arxiv:2412.02321}.

\bibitem{wlkggb05}
A. W\'{o}jcik, T. \L{}uczak, P. Kurzy\'{n}ski, A. Grudka, T. Gdala, M. Bednarska,
\textit{Unmodulated spin chains as universal quantum wires},
Physical Review A {\bf 72}:034303, 2005.

\bibitem{wlkggb07}
A. W\'{o}jcik, T. \L{}uczak, P. Kurzy\'{n}ski, A. Grudka, T. Gdala, M. Bednarska,
\textit{Multiuser quantum communication networks},
Physical Review A, {\bf 75}:022330, 2007.

\end{thebibliography}
\end{document}